\newcommand{\myset}[1]{\{1,\dots,#1\}}
\newcommand{\id}{\mathbbm{1}}
\newcommand{\card}[1]{|{#1}|}
\newcommand{\V}{\mathcal{V}}
\newcommand{\U}{\mathcal{U}}
\newcommand{\comment}[1]{}
\newtheorem{proposition}{Proposition}
\newtheorem{theorem}{Theorem}
\newtheorem{lemma}{Lemma}
\theoremstyle{definition}
\newtheorem{example}{Example}
\newtheorem{remark}{Remark}
\DeclareMathOperator{\Hom}{Hom}
\DeclareMathOperator{\Ker}{Ker}
\DeclareMathOperator{\Img}{Im}
\DeclareMathOperator{\Aut}{Aut}
\DeclareMathOperator{\wt}{wt}
\DeclareMathOperator{\M}{M}
\DeclareMathOperator{\soc}{soc}
\newcommand{\Z}{\mathbb{Z}}
\title{Geometric approach to the MacWilliams Extension Theorem for codes over modules}
\author{Serhii Dyshko
	\thanks{Electronic address: \texttt{dyshko@univ-tln.fr}}}
\affil{Institut de math\'ematiques de Toulon, Universit\'e de Toulon, France}
\date{}
\begin{document}
\maketitle

\begin{abstract}
	The MacWilliams Extension Theorem states that each linear Hamming isometry of a linear code extends to a monomial map.
	In this paper an analogue of the extension theorem for linear codes over a module alphabet is observed.
	A geometric approach to the extendability of isometries is described. For a matrix module alphabet we found the minimum length of a code for which an unextendable Hamming isometry exists. We also proved an extension theorem for MDS codes over a module alphabet.
\end{abstract}

\section{Introduction}
The famous MacWilliams Extension Theorem states that each linear isometry of a linear code extends to a monomial map. The result was originally proved in the MacWilliams' Ph.D. thesis, see \cite{macwilliams-phd61}. Later, the result was generalized for codes over modules. A summary is given below.

Let $R$ be a ring with identity and let $A$ be a finite left $R$-module. Consider a module $A^n$ with the Hamming metrics and a code $C \subseteq A^n$ that is a left $R$-submodule. For two $R$-modules $A$ and $B$ let $\Hom_R(A,B)$ denote the set of $R$-module homomorphism from $A$ to $B$.
Call a map $f \in \Hom_R(A^n,A^n)$ \emph{monomial}, if there exist a permutation $\pi \in S_n$ and automorphisms $g_1,\dots, g_n \in \Aut_R({A})$, such that, for any $a \in {A}^n$,
\begin{equation*}
f\big((a_1,\dots,a_n)\big) = (g_1(a_{\pi(1)}), \dots, g_n(a_{\pi(n)}))\;.
\end{equation*}

Note that a monomial map preserves the Hamming distance. It can be easily shown, that each isometry $f \in \Hom_R(A^n,A^n)$ is monomial.
We say that an alphabet $A$ has the \emph{extension property} if for any positive integer $n$, for any code $C \subseteq A^n$ each Hamming isometry $f \in \Hom_R(C, A^n)$ extends to a monomial map.

\begin{figure}[!ht]
	\centering
	\begin{tikzcd}
		A^n \arrow{rd}{h}&\\
		C \arrow[hook]{u}{\iota} \arrow[swap]{r}{f}& A^n 	
	\end{tikzcd}
	\caption{Extension property.}
	\label{fig-extendibility}
\end{figure}

Classical linear codes correspond to the case when $R = \mathbb{F}_q$ and $A = R$, where $\mathbb{F}_q$ is a finite field. As we mentioned before, the MacWilliams Extension Theorem states that the alphabet $A = R = \mathbb{F}_q$ has the extension property.

Apparently, not every module alphabet satisfies the extension property.
Recall the definition of a pseudo-injective module. A left $R$-module $A$ is called \emph{pseudo-injective}, if for each left $R$-submodule $B \subseteq A$ and for each two embeddings $\phi,\psi \in \Hom_R({B},{A})$ there exists an automorphism $h \in \Aut_R({A})$ such that $\psi = h\phi$.
\begin{figure}[!ht]
	\centering
	\begin{tikzcd}
		B \arrow[hook]{r}{\psi} \arrow[hook, swap]{dr}{\phi} & A\\
		& A \arrow[swap]{u}{h} 	
	\end{tikzcd}
	\caption{Pseudo-injectivity of a module $A$.}
	\label{fig-x-prop}
\end{figure}
In other words, $A$ is pseudo-injective if and only if any $R$-linear isomorphism between its submodules extends to an $R$-linear automorphism of $A$. Not all the $R$-modules are pseudo-injective.
\begin{example}
	Consider a $\Z$-module $A = \Z_2 \oplus \Z_4$. Let $M = \langle (0,2)\rangle$ and $N = \langle (1,0)\rangle$. Obviously, $M \cong N$, with isomorphism $\psi: (0,2) \rightarrow (1,0)$, but there is no isomorphism $\phi: A \rightarrow A$ such that $\phi = \psi$ on $M$. So, $A$ is not pseudo-injective.
\end{example}

Recall the \emph{socle} of $A$ is a submodule $\soc(A) \subseteq A$ that is equal to the sum of all simple submodules of $A$. A module is called \emph{simple} (or irreducible) if it does not contain any other submodules except zero and itself.
In \cite{wood-foundations} the author proved a general extension theorem for a pseudo-injective module alphabet with a cyclic socle and showed that these conditions are maximal.
\begin{theorem}\label{thm-wood-general-modules}
	If $A$ is an alphabet that is pseudo-injective and $\soc(A)$ cyclic, then $A$ has the extension property.
\end{theorem}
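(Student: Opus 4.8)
The plan is to translate the extension property into a multiplicity-matching problem for the coordinate functionals, dispatch the qualitative part with pseudo-injectivity, and concentrate all the real work into one inversion lemma powered by the cyclic-socle hypothesis. Writing $p_i\colon A^n\to A$ for the projections, I would encode $C$ by the functionals $\lambda_i=p_i|_C\in\Hom_R(C,A)$ and $f$ by $\mu_j=p_j\circ f\in\Hom_R(C,A)$, so that $C$ is recorded by the multiset $\{\lambda_1,\dots,\lambda_n\}$ and $f$ by $\{\mu_1,\dots,\mu_n\}$. A monomial map determined by $(\pi,g_1,\dots,g_n)$ restricts to $f$ exactly when $\mu_j=g_j\circ\lambda_{\pi(j)}$ for all $j$. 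Letting $\Aut_R(A)$ act on $\Hom_R(C,A)$ by $g\cdot\lambda=g\lambda$, this shows that a monomial extension exists if and only if the multisets $\{\mu_j\}$ and $\{\lambda_i\}$ meet every $\Aut_R(A)$-orbit in equally many functionals: within a single orbit any two functionals differ by an automorphism, so equal orbit-counts can be assembled into the required $\pi$ and $g_j$.

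Next I would use pseudo-injectivity to pin down the orbits as kernel-classes, claiming that $\lambda$ and $\mu$ lie in one $\Aut_R(A)$-orbit if and only if $\Ker\lambda=\Ker\mu$. The forward implication is immediate. Conversely, if $K=\Ker\lambda=\Ker\mu$, the induced maps $\bar\lambda,\bar\mu\colon C/K\hookrightarrow A$ are embeddings; applying pseudo-injectivity to the submodule $\Img\lambda\subseteq A$ with the two embeddings $\Img\lambda\hookrightarrow A$ and $\bar\mu\bar\lambda^{-1}\colon\Img\lambda\to A$ produces $g\in\Aut_R(A)$ with $g\lambda=\mu$. Thus the problem reduces to proving that for every submodule $K\subseteq C$ the kernel-multiplicities agree,
\[
 b(K):=\card{\{i:\Ker\lambda_i=K\}}=\card{\{j:\Ker\mu_j=K\}}=:b'(K).
\]

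The isometry hypothesis enters only through Hamming weight: $\wt(c)=\wt(f(c))$ for all $c\in C$ says $\card{\{i:\lambda_i(c)=0\}}=\card{\{j:\mu_j(c)=0\}}$, that is $\sum_{K\ni c}b(K)=\sum_{K\ni c}b'(K)$ for every $c\in C$. Writing $g(K)=\card{\{i:\Ker\lambda_i\supseteq K\}}=\sum_{K'\supseteq K}b(K')$, this is precisely the assertion that $g$ and $g'$ agree on every cyclic submodule $Rc$, and the crux is to upgrade this to $b=b'$. Only submodules $K$ with $C/K\hookrightarrow A$ occur as kernels, and for such $K$ the socle $\soc(C/K)$ embeds into $\soc(A)$; since $\soc(A)$ is a cyclic semisimple module and submodules of a semisimple module are direct summands of it while direct summands of cyclic modules are cyclic, $\soc(C/K)$ is itself cyclic. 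I would exploit this cyclicity of the socles of the admissible quotients to show that the incidence matrix $\big([\,Rc\subseteq K\,]\big)$, indexed by elements $c\in C$ and admissible kernels $K$, has $\mathbb{Q}$-linearly independent columns --- equivalently, that the indicator functions $\id_K$ of the admissible kernels are linearly independent. Granting this, the linear system $\sum_{K\ni c}b(K)=\card{\{i:\lambda_i(c)=0\}}$ determines $b$ from the zero-count data, forces $b=b'$, and the extension follows from the two preceding reductions.

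The main obstacle is exactly this inversion lemma: the passage from the element-wise (cyclic-submodule) zero-count data to the full kernel-multiplicity function. This is where cyclic socle is indispensable rather than merely convenient, since for alphabets with non-cyclic socle the relevant incidence matrix degenerates and the matrix-module counterexamples arise; any correct argument must extract linear independence of the admissible-kernel indicators precisely from cyclicity of $\soc(A)$. By contrast, pseudo-injectivity is the soft ingredient, serving only to identify orbits with kernel-classes. As a safeguard I would keep an equivalent route in reserve: replace the indicator functions $\id_K$ by the additive characters $\chi\circ\lambda_i$ of $C$ and recast the crux as invertibility of a character-sum matrix over the orbit poset, again governed by the cyclicity of the socle.
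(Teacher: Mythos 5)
Your reductions are sound and in fact coincide with the paper's Proposition~\ref{thm-isometry-criterium}: the orbit-matching criterion for monomial extendability, the use of pseudo-injectivity to identify $\Aut_R(A)$-orbits with kernel classes, and the translation of the isometry hypothesis into the pointwise identity $\sum_{K\ni c}b(K)=\sum_{K\ni c}b'(K)$ are all correct and are exactly the paper's equation~(\ref{eq-main-space-equation}). You have also correctly located the crux (upgrading pointwise equality to $b=b'$, i.e.\ linear independence of the indicators of admissible kernels), and your observation that admissible kernels satisfy $C/K\hookrightarrow A$ with $\soc(C/K)$ cyclic is valid. The problem is that this crux --- which is the entire mathematical content of the theorem --- is never proved; it is announced as ``the main obstacle'' and left as a statement of intent.

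Moreover, the route you sketch for it does not have a visible mechanism. Cyclicity of $\soc(C/K)$ gives no direct handle on the incidence matrix of the kernels $K$ themselves: the natural argument (take a maximal $K_0$ in the support of a relation, restrict the relation to $K_0$, and derive a contradiction from $K_0=\bigcup_{K'}(K'\cap K_0)$) requires that $K_0$ not be a union of proper submodules, i.e.\ that $K_0$ itself be cyclic (the paper's \Cref{lemma-cyclic-if-covering,lemma-nontrivial-iff-noncyclic}) --- and the kernels are not cyclic in general. The hypothesis on $\soc(A)$ only becomes a covering-type property after dualizing: the paper applies the Fourier transform, $\mathcal{F}(\id_K)=\card{K}\,\id_{K^\perp}$, notes that $K^\perp\cong\widehat{C/K}$ is a quotient of $\hat{A}_R$, and invokes Wood's theorem (cited in the paper's remark) that $\soc(A)$ is cyclic if and only if $\hat{A}_R$ is cyclic; only then does the covering lemma apply, to the cyclic modules $K^\perp$, and yield the independence you need. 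These two ingredients --- the duality step (including the nontrivial equivalence $\soc(A)$ cyclic $\Leftrightarrow$ $\hat{A}$ cyclic) and the covering characterization of cyclicity --- are precisely what is missing from your proposal. Your ``safeguard'' character-sum route is in fact the actual proof, not a backup, but it is equally unproven as written.
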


In the case, when $A = R$, in \cite{dinh-lopez-1, greferath, wood-foundations} the authors proved the extension theorem for Frobenius rings and showed the maximality of the condition.
\begin{theorem}\label{thm-frobenius-wood}
	Let $R$ be a Frobenius ring. Then the alphabet $A = R$ has the extension property.
\end{theorem}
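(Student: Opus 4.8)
The quickest route is to deduce Theorem~\ref{thm-frobenius-wood} from Theorem~\ref{thm-wood-general-modules}: we only have to check that the alphabet $A = R$, that is, the left regular module ${}_R R$, is pseudo-injective and has cyclic socle. Since the setup forces the alphabet to be finite, $R$ is a finite ring; write $J = J(R)$ for its Jacobson radical and $\widehat{R} = \Hom_{\mathbb{Z}}(R,\mathbb{Q}/\mathbb{Z})$ for the character bimodule. I would recall the standard characterisation (for finite rings the Frobenius condition is left/right symmetric): $R$ is Frobenius exactly when ${}_R R \cong {}_R\widehat{R}$ as left modules, equivalently when $\soc({}_R R) \cong {}_R(R/J)$. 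The second formulation settles the socle at once, since $R/J$ is generated by $1 + J$, so $\soc({}_R R)$ is a cyclic left $R$-module.

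For pseudo-injectivity, first note that $\widehat{R} = \Hom_{\mathbb{Z}}(R_R,\mathbb{Q}/\mathbb{Z})$ is injective as a left $R$-module: this follows from the adjunction $\Hom_R(-,\widehat{R}) \cong \Hom_{\mathbb{Z}}(-,\mathbb{Q}/\mathbb{Z})$ together with divisibility of $\mathbb{Q}/\mathbb{Z}$. Hence ${}_R R \cong {}_R\widehat{R}$ is injective (so $R$ is in particular quasi-Frobenius), and it remains to prove the general fact that a module $A$ of finite length that is injective is pseudo-injective. Given $B \subseteq A$ and embeddings $\phi,\psi \in \Hom_R(B,A)$, replacing $B$ by $\phi(B)$ reduces us to extending a single embedding $\psi\colon B \hookrightarrow A$ to an automorphism of $A$. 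Choose injective hulls $E \supseteq B$ and $E' \supseteq \psi(B)$ sitting inside $A$; injectivity of $A$ makes them direct summands, say $A = E \oplus X = E' \oplus Y$, and $\psi$ extends to an isomorphism $E \xrightarrow{\sim} E'$ of injective hulls. From $E \oplus X \cong A \cong E' \oplus Y$ with $E \cong E'$, the Krull--Schmidt cancellation property (valid because $A$ has finite length) gives $X \cong Y$, and gluing the two isomorphisms produces an automorphism of $A$ restricting to $\psi$ on $B$.

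Having verified both hypotheses of Theorem~\ref{thm-wood-general-modules} for $A = R$, that theorem yields the extension property and completes the proof. The only step that genuinely requires care is the pseudo-injectivity argument, precisely the isomorphism $X \cong Y$ of the complements: this is where finiteness of $R$ is used, via cancellation for finite-length modules, and without it the extended map $A \to A$ need not be bijective; everything else is bookkeeping with the usual characterisations of finite Frobenius rings. One could instead bypass Theorem~\ref{thm-wood-general-modules} entirely and run the classical character-sum argument directly, but with that theorem available the reduction above is the economical choice.
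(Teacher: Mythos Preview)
The paper does not actually supply a proof of Theorem~\ref{thm-frobenius-wood}; it is stated as a known result with references to \cite{dinh-lopez-1, greferath, wood-foundations}. Your derivation from Theorem~\ref{thm-wood-general-modules} is correct: the characterisations of finite Frobenius rings you invoke are standard, the cyclic socle via $\soc({}_R R)\cong {}_R(R/J)$ is immediate, and the pseudo-injectivity argument (self-injectivity from ${}_RR\cong{}_R\widehat R$, then injective hulls plus Krull--Schmidt cancellation in finite length) is sound.

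It is worth noting that the paper's own machinery gives a slightly different route. Theorem~\ref{thm-cyclic-extendable} asks for $\widehat A$ to be cyclic as a right $R$-module rather than for $\soc(A)$ to be cyclic; for a finite Frobenius ring one has $\widehat R_R\cong R_R$, so $\widehat A=\widehat R$ is visibly cyclic (generated by the image of $1$), and pseudo-injectivity is checked exactly as you do. The Remark following Theorem~\ref{thm-cyclic-extendable} explains why the two hypotheses are equivalent in general, so the difference between your reduction and the one most naturally suggested by the paper is purely cosmetic. Either way the only place finiteness is genuinely used is the cancellation step you flagged.
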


In \cite{dinh-lopez, dinh-lopez-1} the extension problem for arbitrary ring and alphabet was partially translated to the case of matrix rings and matrix modules.
There the authors proved the existence of a general counterexample for codes over a matrix module alphabet. An explicit construction appeared in \cite{wood-foundations}.
\begin{theorem}[see \cite{wood-foundations}]\label{thm-wood-matrix-module}
	Let $R = \M_{m}(\mathbb{F}_q)$ be a ring of all $m \times m$ matrices over a finite field $\mathbb{F}_q$ and let $A = \M_{m \times k}(\mathbb{F}_q)$ be a left $R$-module of all $m \times k$ matrices over $\mathbb{F}_q$.
	
	If $k \leq m$, then the alphabet $A$ has the extension property.
	
If $k > m$, there exist a linear code $C \subset A^K$, $K = \prod_{i=1}^{k-1} (1 + q^i)$, and a map $f \in \Hom_R(C,A^K)$ that is a Hamming isometry, but there is no monomial map extending $f$.
\end{theorem}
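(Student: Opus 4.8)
The plan is to dispatch the two cases by completely different means: the case $k\le m$ is a one-line application of Theorem~\ref{thm-wood-general-modules}, while the case $k>m$ needs an explicit geometric construction. Since $R=\M_m(\mathbb{F}_q)$ is simple Artinian, hence semisimple, every $R$-module is a direct sum of copies of the unique simple module $C=\mathbb{F}_q^m$ (columns); in particular $A\cong C^{\oplus k}$ and $\soc(A)=A$. First I would verify that $A$ is pseudo-injective: given $B\subseteq A$ and two embeddings $\phi,\psi$, their images are direct summands with isomorphic complements (both $\cong C^{\oplus(k-\dim_C B)}$), so $\psi\phi^{-1}$ extends to an automorphism $h$ by homogeneity of semisimple modules. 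Next, when $k\le m$ the module $A=\M_{m\times k}$ is cyclic: the matrix $e=\left(\begin{smallmatrix}I_k\\0\end{smallmatrix}\right)$ generates it, since $ge$ is the array of the first $k$ columns of $g$ and these range over all of $A$ as $g$ ranges over $R$. Hence $\soc(A)=A$ is cyclic and Theorem~\ref{thm-wood-general-modules} gives the extension property.

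For $k>m$ I would pass to a geometric model of codes. Take the information module $M=A=C^{\oplus k}$, so that $\Hom_R(M,A)\cong \M_k(\mathbb{F}_q)$ and $\Aut_R(A)\cong \mathrm{GL}_k(\mathbb{F}_q)$, both acting by right multiplication. Writing an element of $M$ as a matrix $X\in\M_{m\times k}$ and the $i$-th coordinate homomorphism as $X\mapsto XL_i$ with $L_i\in\M_k$, one has $XL_i=0$ iff $\operatorname{col}(L_i)\subseteq\ker X$. Associating to coordinate $i$ the subspace $R_i:=\operatorname{col}(L_i)\subseteq\mathbb{F}_q^k$, the Hamming weight of a codeword equals $\#\{i:R_i\not\subseteq\ker X\}$, and as $X$ runs over $\M_{m\times k}$ its kernel runs over exactly the subspaces $U$ with $\dim U\ge k-m$. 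Because right multiplication by $\mathrm{GL}_k$ preserves column spaces, two such codes are monomially equivalent iff they carry the same multiset $\{R_i\}$. Thus the whole problem reduces to producing two \emph{distinct} multisets of subspaces of $\mathbb{F}_q^k$ with equal containment counts $\#\{i:R_i\subseteq U\}$ for every $U$ with $\dim U\ge k-m$.

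The heart of the construction is the M\"obius function $\mu$ of the subspace lattice of $\mathbb{F}_q^k$, namely $\mu(0,R)=(-1)^{\dim R}q^{\binom{\dim R}{2}}$. I would set $\delta_R=\mu(0,R)$ and take its positive and negative parts: let $\mathcal R$ contain each even-dimensional subspace $R$ with multiplicity $q^{\binom{\dim R}{2}}$ and $\mathcal R'$ each odd-dimensional subspace with the same multiplicities. For every $U\ne 0$ the defining M\"obius identity $\sum_{R\subseteq U}\mu(0,R)=0$ shows that $\mathcal R$ and $\mathcal R'$ have identical containment counts at $U$, hence at every $U$ with $\dim U\ge k-m\ge 1$, so the associated map $f$ is a genuine Hamming isometry (equality of weights also forces $\ker\Lambda=\ker\Lambda'$, so $f$ is well defined). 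Their common size is $\sum_{j\ \mathrm{even}}\binom{k}{j}_q q^{\binom{j}{2}}$, and the $q$-binomial theorem $\prod_{i=0}^{k-1}(1+q^i t)=\sum_j\binom{k}{j}_q q^{\binom{j}{2}}t^j$ evaluated at $t=1$ and $t=-1$ (the latter vanishes through its $i=0$ factor) shows the even and odd sums are equal, each being $\tfrac12\prod_{i=0}^{k-1}(1+q^i)=\prod_{i=1}^{k-1}(1+q^i)=K$.

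It remains to conclude. The multisets $\mathcal R$ and $\mathcal R'$ have disjoint supports (even- versus odd-dimensional subspaces), so they are distinct, and by the invariance observed above $f$ admits no monomial extension; this already handles $m=k-1$, and since the count-matching holds for \emph{all} nonzero $U$ the very same pair works for every $m$ with $1\le m\le k-1$, giving a counterexample of length exactly $K$ whenever $k>m$. I expect the main obstacle to be assembling the dictionary cleanly — pinning down the column space as the sole monomial invariant and the kernel dimensions $\ge k-m$ as the admissible test subspaces — and then recognizing that the M\"obius function is precisely the incidence-annihilating combination whose positive and negative parts are honest multisets; the evaluation of the $q$-binomial identity at $t=\pm1$ is exactly what certifies both that the two parts have equal cardinality and that this cardinality is the stated $K$.
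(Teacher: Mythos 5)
Your proposal is correct, but there is nothing in the paper to compare it against line by line: the paper never proves \Cref{thm-wood-matrix-module}, it imports it from \cite{wood-foundations} and later reuses it as a black box (in the proof of \Cref{thm-matrix-module-bound} the paper invokes Wood's unextendable isometry and the fact that Wood's code has a zero column while its image does not). What you have written is in effect a self-contained reconstruction of Wood's argument, and it meshes precisely with the paper's own machinery. For $k\le m$ you combine semisimplicity (pseudo-injectivity, as in \Cref{lemma-semisimple-pseudoinjective}) with cyclicity of $A=\soc(A)$ and \Cref{thm-wood-general-modules}; the paper's route in \Cref{thm-matrix-module-bound} is the mirror image of this, via cyclicity of the character module $\hat{A}$ and \Cref{thm-cyclic-extendable}. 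For $k>m$, your dictionary --- coordinate maps $X\mapsto XL_i$, kernels determined by the column spaces $R_i$, weight equal to $\#\{i: R_i\not\subseteq\Ker X\}$, monomial extensions forced to preserve the multiset $\{R_i\}$ --- is exactly \Cref{thm-isometry-criterium} written out concretely for matrix modules: equality of containment counts at every subspace of dimension at least $k-m$ is \cref{eq-main-space-equation}, and distinctness of the two multisets is nontriviality of the solution. Your M\"obius construction (even- versus odd-dimensional subspaces with multiplicities $q^{\binom{j}{2}}$) is essentially Wood's counterexample; note that it reproduces the zero column (coming from $R=\{0\}\in\mathcal{R}$) that the paper quotes as the feature certifying unextendability, and the coefficients $(-1)^jq^{\binom{j}{2}}$ are exactly those that the paper's minimality analysis in \Cref{thm-matrix-module-bound} shows any minimal nontrivial solution must carry, with \Cref{lemma-binomial-sums} playing the role of your evaluations of the Cauchy binomial theorem at $t=\pm 1$. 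One genuine difference in economy: the paper's general existence mechanism, \Cref{lemma-nontrivial-iff-noncyclic}, produces nontrivial solutions by inclusion--exclusion over an arbitrary covering by submodules, which is far more wasteful in length; your lattice-wide M\"obius solution achieves length exactly $K$ and, as you observe, works uniformly for every $m<k$ since the containment counts agree at all nonzero subspaces. The two small steps you compress --- well-definedness of $f$ (handled by your remark that equal weights force $\Ker\Lambda=\Ker\Lambda'$) and the fact that $XL'_i=XL_{\pi(i)}G_i$ for all $X$ forces $L'_i=L_{\pi(i)}G_i$, hence equal column spaces --- are both routine and correct.
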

Note that the theorem does not say if the given  counterexample has minimum possible code length.
In this paper we improve \Cref{thm-wood-matrix-module}. More precisely, in the context of matrix modules, we found the minimum code length for which an example of a code with unextendable isometry exists. It appear that such a code with minimal code length is similar to the code from \Cref{thm-wood-matrix-module}, described in \cite{wood-foundations}. In our previous work \cite{d1} we found the precise bound for the case $m = 1$, which corresponds to linear codes over a vector space alphabet.

Our main idea is to use a geometric approach. In \Cref{thm-isometry-criterium} we describe an unextendable isometries in terms of nontrivial solution of the isometry equation (\ref{eq-main-space-equation}), which is an equation of indicator functions of modules. In \cite{d1} we observed basic properties of this equation for the case of vector spaces and here, in \Cref{thm-matrix-module-bound} we describe some properties of the equation for matrix modules.

In \Cref{thm-mds-extension-theorem} we prove that the extension property holds for MDS codes over a module alphabet, when the dimension of a code does not equal 2. Despite the general result of \Cref{thm-wood-general-modules}, for MDS codes the extension theorem holds for arbitrary finite $R$-module alphabet.

\section{Extension criterium}
Let $W$ be a left $R$-module isomorphic to $C$.
Let $\lambda \in \Hom_R(W, {A^n})$ be a map such that $\lambda(W) = C$.
Present the map $\lambda$ in the form $\lambda = (\lambda_1,\dots, \lambda_n)$, where $\lambda_i \in \Hom_R(W,{A})$ is a projection on the $i$th coordinate, for $i \in \myset{n}$. Consider the following modules, for $i \in \myset{n}$,
\begin{equation*}
	V_i = \Ker \lambda_i \subseteq W\;.
\end{equation*}

Let $f: C \rightarrow A^n$ be a homomorphism of left $R$-modules.
Define $\mu = f\lambda \in \Hom_R(W,A^n)$ and denote
\begin{equation*}
U_i = \Ker \mu_i \subseteq W\;.
\end{equation*}

\begin{figure}[!ht]
	\centering
	\begin{tikzcd}
		W \arrow{r}{\lambda} \arrow[swap]{rd}{\mu} & C \arrow{d}{f}\\
		& A^n  	
	\end{tikzcd}
	\caption{The maps $\lambda$ and $\mu$.}
	\label{fig-parametriztion}
\end{figure}

Denote the tuples of modules $\V = (V_1, \dots, V_n)$ and $\U = (U_1, \dots, U_n)$.
We say that $\V = \U$ if they represent the same multiset of modules. In other words, $\V = \U$ if and only if there exists $\pi \in S_n$ such that for each $i \in \myset{n}$, $U_i = V_{\pi(i)}$. Recall the indicator function of a subset $Y$ of a set $X$ is a map $\id_Y: X \rightarrow \{0,1\}$, such that $\id_Y(x) = 1$ if $x \in Y$ and $\id_Y(x) = 0$ otherwise.

\begin{proposition}\label{thm-isometry-criterium}
	The map $f \in \Hom_R (C,A^n)$ is a Hamming isometry if and only if the following equality holds,
	\begin{equation}\label{eq-main-space-equation}
	\sum_{i=1}^n \id_{V_i} = \sum_{i=1}^n \id_{U_i}\;.
	\end{equation}	
	If $f$ extends to a monomial map, then $\V = \U$.
	If $A$ is pseudo-injective and $\V = \U$, then $f$ extends to a monomial map. 
\end{proposition}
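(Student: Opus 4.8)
The plan is to translate the whole problem into the language of the indicator functions appearing in (\ref{eq-main-space-equation}): first express the Hamming weight of a codeword through the modules $V_i$ and $U_i$, and then read all three assertions off that formula.

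First I would establish the weight formula. For $w \in W$ the $i$th coordinate of $\lambda(w)$ is $\lambda_i(w)$, which vanishes exactly when $w \in V_i$, so $\wt(\lambda(w)) = n - \sum_{i=1}^n \id_{V_i}(w)$; since $\mu = f\lambda$, the same argument gives $\wt(\mu(w)) = n - \sum_{i=1}^n \id_{U_i}(w)$. As $f$ is $R$-linear, it is a Hamming isometry if and only if it preserves Hamming weight on $C$ (because $d(f(x),f(y)) = \wt(f(x-y))$), and since $\lambda$ maps $W$ onto $C$ with $f(\lambda(w)) = \mu(w)$, this is the same as $\wt(\mu(w)) = \wt(\lambda(w))$ for all $w \in W$. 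By the two formulas this is precisely the pointwise identity $\sum_{i} \id_{V_i}(w) = \sum_{i} \id_{U_i}(w)$, which settles the equivalence with (\ref{eq-main-space-equation}).

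For the implication ``extendable $\Rightarrow \V = \U$'' I would suppose $f$ is the restriction to $C$ of a monomial map $h$ with data $\pi \in S_n$ and $g_1,\dots,g_n \in \Aut_R(A)$. Then $\mu = f\lambda = h\lambda$, and comparing $i$th coordinates yields $\mu_i = g_i\circ\lambda_{\pi(i)}$; since $g_i$ is injective, $U_i = \Ker\mu_i = \Ker\lambda_{\pi(i)} = V_{\pi(i)}$ for every $i$, i.e. $\V = \U$.

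The substantive part is the converse under the pseudo-injectivity hypothesis, and this is where I expect the only real work. Given $\pi$ with $U_i = V_{\pi(i)}$ for all $i$, I would set $B_i = \Img\lambda_{\pi(i)} \subseteq A$ and observe that $\Ker\lambda_{\pi(i)} = V_{\pi(i)} = U_i = \Ker\mu_i$, so $\lambda_{\pi(i)}$ and $\mu_i$ both factor through injections out of $W/U_i$, giving an isomorphism $\bar\lambda_{\pi(i)}\colon W/U_i \to B_i$ and an embedding $\bar\mu_i\colon W/U_i \hookrightarrow A$; hence $\psi_i := \bar\mu_i\circ\bar\lambda_{\pi(i)}^{-1}\colon B_i \hookrightarrow A$ is an embedding. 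Applying pseudo-injectivity of $A$ to the submodule $B_i$ and the two embeddings $B_i \hookrightarrow A$ (the inclusion) and $\psi_i$ produces $g_i \in \Aut_R(A)$ with $g_i|_{B_i} = \psi_i$, which unwinds to $\mu_i = g_i\circ\lambda_{\pi(i)}$ on all of $W$. Then I would let $h$ be the monomial map determined by $\pi$ and $g_1,\dots,g_n$, and check that for any $c = \lambda(w) \in C$ one has $h(c)_i = g_i(c_{\pi(i)}) = g_i(\lambda_{\pi(i)}(w)) = \mu_i(w) = f(c)_i$, so $h$ extends $f$. The delicate point is organizational rather than conceptual: one must package $\lambda_{\pi(i)}$ and $\mu_i$ as two genuine embeddings of a common module into $A$ so that the definition of pseudo-injectivity applies verbatim, and then verify that the coordinatewise automorphisms $g_i$ really assemble into a single monomial map agreeing with $f$ on the whole of $C$, not merely one whose coordinate kernels match those of $f$.
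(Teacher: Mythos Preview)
Your proposal is correct and follows essentially the same route as the paper: reduce the isometry condition to the indicator identity via the weight formula, observe that a monomial extension forces equal kernels, and in the pseudo-injective case pass to the quotient $W/U_i$ to obtain two embeddings into $A$ related by an automorphism. You are in fact more careful than the paper on the last point---you explicitly realize the domain as a genuine submodule $B_i\subseteq A$ before invoking pseudo-injectivity and then verify that the assembled monomial map agrees with $f$---whereas the paper leaves these organizational checks implicit.
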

\begin{proof}
	Prove the first part. By definition, the map $f$ is a Hamming isometry if for each $a \in C$, $\wt(f(a)) = \wt(a)$, or, equivalently, $f$ is an isometry if and only if for each $w \in W$, $\wt(\lambda(w)) = \wt(\mu(w))$. Note that
	for any $w \in W$, $n - \wt(\lambda(w)) = \sum_{i=1}^{n} (1 - \wt(\lambda_i(w))) = \sum_{i=1}^n \id_{\Ker \lambda_i} (w)$ and the same for the map $\mu$. Hence \cref{eq-main-space-equation} holds.
	
	Prove the second part. Consider any two maps $\sigma, \tau \in \Hom_R(W,A)$. 
	If there exists $g \in \Aut_R({A})$ such that $\sigma = g\tau$ then $\Ker \sigma = \Ker \tau$.
	
	Let $\Ker \sigma = \Ker \tau = N \subseteq W$ and let $A$ be pseudo-injective. The corresponding homomorphisms defined on the quotient module $\bar{\sigma}, \bar{\tau}: W/N \rightarrow A$ are injective. Using the property of $A$, there exists $h\in \Aut_R(A)$ such that $\bar{\sigma} = h \bar{\tau}$. It is easy to check that $\sigma = h \tau$.
\end{proof}

Let a pair of tuples of modules $(\U,\V)$ be a solution of \cref{eq-main-space-equation}. If $\U= \V$ then we call the solution \emph{trivial}. \Cref{thm-isometry-criterium} gives a relation between trivial solutions of \cref{eq-main-space-equation} and extendable isometries.

\begin{remark}As it was noted in \cite{dinh-lopez-1} and \cite{wood-foundations}, the property of pseudo-injectivity is necessary in the statement. Assuming that the alphabet is not pseudo-injective means that the extension property fails even if the length of a code is 1.
\end{remark}

\section{General extension theorem}
In this section we show how to use the approach from the previous section to prove \Cref{thm-wood-general-modules}.
Recall that a left $R$-module $M$ is called \emph{cyclic} if there exists a generator element $x \in M$, such that $M = Rx = \{rx \mid r \in R \}$.
\begin{lemma}\label{lemma-cyclic-if-covering}
	An $R$-module $M$ is not cyclic if and only if there exist submodules $\{0\} \subset E_1, \dots, E_r \subset M$ such that $M = \bigcup_{i=1}^r E_i$.
\end{lemma}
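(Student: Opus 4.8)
The plan is to prove the two implications directly, since each is short; the real content lies in the ``only if'' direction, and the chief thing to watch is the finiteness of $M$ that is implicit in the statement (in this paper every module in sight is a submodule of $A^n$ with $A$ finite, so this is harmless).

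For the ``if'' direction I would argue straight from the definition of cyclicity: if $M = \bigcup_{i=1}^r E_i$ with every $E_i$ a proper submodule and $M = Rx$ were cyclic, then the generator $x$ would lie in some $E_j$, forcing $M = Rx \subseteq E_j \subsetneq M$, which is absurd. Note that this half needs neither that the $E_i$ are nonzero nor that $M$ is finite.

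For the ``only if'' direction the idea is to cover $M$ by its proper cyclic submodules. Assume $M$ is not cyclic; then $M \neq \{0\}$, because $\{0\} = R\cdot 0$ is cyclic, and for every $x \in M$ the cyclic submodule $Rx$ is a \emph{proper} submodule of $M$ (otherwise $x$ would generate $M$). Since $R$ has an identity, $x = 1\cdot x \in Rx$, so $Rx$ is a \emph{nonzero} proper submodule whenever $x \neq 0$, and trivially $M = \bigcup_{x \in M \setminus \{0\}} Rx$. Finally I would invoke finiteness of $M$: only finitely many distinct submodules occur among the $Rx$, so relabelling them as $E_1,\dots,E_r$ yields the required finite covering by nonzero proper submodules. (A variant, if one prefers not to use all cyclic submodules: let $E_1,\dots,E_r$ be the maximal submodules of $M$ — each proper $Rx$ sits inside one of them, they are finite in number, and none is $\{0\}$ since a module whose only maximal submodule is $\{0\}$ is simple, hence cyclic.)

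I do not expect a genuine obstacle here: once one unwinds ``not cyclic'' as ``no single $Rx$ equals $M$'', the tautological covering $M = \bigcup_x Rx$ does all the work, and finiteness of $M$ makes it a finite covering. The only subtlety is bookkeeping — ensuring the covering submodules are simultaneously proper (this is exactly non-cyclicity) and nonzero (this is where $R$ being unital enters) — which is precisely why those hypotheses appear.
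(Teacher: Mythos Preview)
Your proof is correct and follows essentially the same route as the paper: the ``if'' direction via a generator lying in some $E_j$, and the ``only if'' direction via the tautological covering $M = \bigcup_{x \in M\setminus\{0\}} Rx$. You are in fact slightly more careful than the paper, which stops at the (a priori infinite) union without explicitly invoking finiteness of $M$ to reduce to finitely many $E_i$.
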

\begin{proof}
	Assume that there exists such a covering $M = \bigcup_{i=1}^r E_i$ of $M$ by submodules and let $M$ be cyclic. For a generator  $x \in M$ there exists $i \in \myset{r}$ such that $x \in E_i$ and thus $M = Rx \subseteq E_i \subset M$ that leads to a contradiction.
	
	If $M$ is not cyclic, then for any $x \in M\setminus\{0\}$, $\{0\} \subset Rx \subset M$ and therefore $M = \bigcup_{x \in M\setminus\{0\}} Rx$.
\end{proof}

\begin{lemma}\label{lemma-nontrivial-iff-noncyclic}
	For each non-cyclic module $M$ there exists a nontrivial solution of \cref{eq-main-space-equation} with at least one module equals $M$.
	A solution of the equation
\begin{equation*}
\sum_{i = 1}^s a_i \id_{V_i} = \sum_{i = 1}^t b_i \id_{U_i}\;,
\end{equation*} 
	with only cyclic modules is trivial, where all the coefficients are in $\mathbb{C}$.
\end{lemma}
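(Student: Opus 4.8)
The plan is to prove the two assertions essentially separately, starting from the characterization of non-cyclic modules in \Cref{lemma-cyclic-if-covering}. For the first assertion, I would take a non-cyclic module $M$ and use \Cref{lemma-cyclic-if-covering} to write $M = \bigcup_{i=1}^r E_i$ with proper nonzero submodules $E_i$. The natural object to exploit is an inclusion--exclusion identity: the indicator $\id_M$ of a union equals an alternating sum $\sum_{\emptyset \ne S \subseteq \myset{r}} (-1)^{|S|+1} \id_{\bigcap_{i \in S} E_i}$ of indicators of intersections. Since all the intersections appearing are submodules of $M$, rearranging this identity so that all terms have positive coefficients yields a relation of the form $\id_M + \sum (\text{indicators of submodules}) = \sum (\text{indicators of submodules})$, where $M$ appears on only one side. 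The remaining work is bookkeeping: one must pad both sides with repeated copies of $\{0\}$ (or of a common submodule) so that the two multisets of submodules have the same total cardinality and genuinely differ, giving a nontrivial solution $(\U,\V)$ of \cref{eq-main-space-equation} with $M$ among the modules. The one subtlety is to guarantee nontriviality, i.e. that $M$ really fails to appear on the other side; this is immediate because every module on the other side is a proper submodule of $M$ (it is contained in some $E_i \subset M$), so the multisets cannot coincide.

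For the second assertion, the statement is over $\mathbb{C}$ and concerns an equation $\sum_{i=1}^s a_i \id_{V_i} = \sum_{i=1}^t b_i \id_{U_i}$ in which all the $V_i, U_j$ are cyclic. The plan is an induction on the size of the largest module occurring. Let $M$ be a module of maximal order among all the $V_i, U_j$, and evaluate the identity at a generator $x$ of $M$. Since $M = Rx$ is cyclic, the only modules in the list that can contain $x$ are those equal to $M$ itself (a module properly containing $Rx$ would be strictly larger, contradicting maximality, and a module incomparable but containing $x$ would also contain $Rx=M$). Comparing the coefficient sums at $x$ forces the total coefficient of $M$ on the left to equal that on the right. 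One then subtracts off the common multiple of $\id_M$ from both sides and removes $M$ from the lists, obtaining a strictly smaller instance of the same kind of equation, and iterates. When no modules remain, the identity is $0=0$, and tracking what was removed at each stage shows that the original left- and right-hand multisets of modules, weighted by coefficients, agreed term by term — in particular, specializing to $0/1$ coefficients, the solution is trivial.

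I expect the main obstacle to be the nontriviality bookkeeping in the first part: inclusion--exclusion naturally produces an identity among indicators of the $E_i$ and their intersections, but to fit the rigid format of \cref{eq-main-space-equation} — two tuples of the \emph{same length} $n$, interpreted as multisets — one must carefully collect positive and negative contributions, handle the possibility that some intersections coincide or equal $\{0\}$, and insert the right number of trivial padding submodules so that lengths match while the multisets stay distinct. A secondary point requiring care is the maximality argument in the second part: one must argue cleanly that for a cyclic module $M = Rx$, membership of the generator $x$ in another cyclic module $N$ from the list forces $M \subseteq N$, so that by maximality $M = N$; this is what makes the evaluation-at-a-generator trick isolate the coefficient of $M$ exactly.
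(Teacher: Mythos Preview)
Your first part is exactly the paper's argument: inclusion--exclusion on a covering $M=\bigcup_{i=1}^r E_i$ supplied by \Cref{lemma-cyclic-if-covering}, split by the parity of $|I|$ (with $M_\emptyset=M$). The padding concern you flag is moot: there are $2^{r-1}$ subsets of each parity, so both sides automatically have the same length, and $M=M_\emptyset$ sits only on the even side while every other $M_I$ lies inside some $E_i\subsetneq M$.

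For the second part your route is correct but differs from the paper's. The paper argues by contradiction: simplify so that all modules are distinct with nonzero coefficients, choose $V_1$ maximal for inclusion among all modules appearing, note that the equation forces $V_1=\bigcup_j(V_1\cap U_j)$ with each intersection a proper submodule, and then invoke \Cref{lemma-cyclic-if-covering} to conclude $V_1$ is non-cyclic, a contradiction. You instead pick a module of maximal \emph{order}, evaluate at a generator to isolate its total coefficient on each side, cancel, and iterate. The paper's version closes the loop neatly with \Cref{lemma-cyclic-if-covering}; yours is more self-contained and handles arbitrary $\mathbb{C}$-coefficients transparently (the covering claim $V_1\subseteq\bigcup_j U_j$ is immediate when coefficients are positive but needs an extra word over $\mathbb{C}$). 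One small correction to your plan: the induction should be on the number of distinct modules (or lexicographically on the pair consisting of the maximal order and the number of modules attaining it), since removing a single maximal-order module may leave other, distinct modules of the same order.
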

\begin{proof}
	Prove the first part.
	Let $M = \bigcup_{i=1}^r E_i$ be a nontrivial covering of $M$ by submodules. Denote $M_I = \bigcap_{i \in I} E_i$, where $I \subseteq \myset{r}$ and define $M_{\emptyset} = M$. Use the inclusion-exclusion formula,
	\begin{equation*}
	\sum_{\card{I} \text{ is even}} \id_{M_I} = \sum_{\card{I} \text{ is odd}} \id_{M_I}\;,
	\end{equation*}
	where the summation is over all subsets $I \subseteq \myset{r}$. 
	It is easy to see that the resulting equation is nontrivial, for example, the module $M$ appears only from the left side. The number of terms on each side is the same and equals $2^{r - 1}$.
	
Prove the second part. Assume that there exists a nontrivial solution of the equation.
Without loss of generality, we can assume that the equation is simplified by eliminating equal terms and making a reindexing. Hence, $a_i,b_j \neq 0$ for $i \in \myset{s}$, $j \in \myset{t}$ and all $V_i, U_i$ are different.
Since the solution is nontrivial, $s,t>0$. Among the modules choose the maximal with respect to the inclusion, suppose it is $V_1$. Then $V_1 = \bigcup_{i=1}^t (V_1 \cap U_i)$, where $\{0\} \subset V_1 \cap U_i \subset V_1$, $i \in \myset{t}$. From \Cref{lemma-cyclic-if-covering}, the module $V_1$ is therefore non-cyclic, which contradicts to our assumption.
\end{proof}

\paragraph{Characters and Fourier transform.} 
Denote by $\hat{A} = \Hom_\mathbb{Z}(A, \mathbb{C}^*)$ the set of characters of $A$. The set $\hat{A}$ has a natural structure of a right $R$-module.
Let $A, W$ be two left $R$-modules. For a map $\sigma \in \Hom_R({W},{A})$ define a map $\hat{\sigma}: \hat{A} \rightarrow \hat{W}$, $\chi \mapsto \chi \sigma$. Note that $\hat{\sigma} \in \Hom_R(\hat{A}_R,\hat{W}_R)$. It is known that $\wedge$ is an exact contravariant functor on the category of left(right) $R$-modules, see \cite{wood-foundations}.

Let $M$ be a left $R$-module. The Fourier transform of a map $f: M \rightarrow \mathbb{C}$ is a map $\mathcal{F}(f): \hat{M} \rightarrow \mathbb{C}$, defined as
\begin{equation*}
\mathcal{F}(f)(\chi) = \sum_{m \in M} f(m)\chi(m)\;.
\end{equation*}
It can be easily proved that for a submodule $V \subseteq M$, $\mathcal{F}(\id_V) = \card{V} \id_{V^\perp}$, where an orthogonal module is defined as $V^\perp = \{ \chi \in \hat{M} \mid \forall v \in V, \chi(v) = 1 \} \subseteq \hat{M}$.
Note that the Fourier transform is invertible, $V^{\perp\perp} \cong V$ and for any $V, U \subseteq W$, $(V \cap U)^\perp = V^\perp + U^\perp$. 

For any $\sigma \in \Hom_R(W,A)$, $\Ker \sigma = \{w \in W \mid \sigma(w) = 0 \} = \{ w \in W \mid \forall \chi \in \hat{A}, \chi(\sigma(w)) = 1 \} = (\Img \hat{\sigma})^\perp$, and thus $(\Ker \sigma)^\perp = \Img \hat{\sigma}$.
\begin{theorem}\label{thm-cyclic-extendable}
	If $\hat{A}$ is a cyclic right $R$-module and $A$ is pseudo-injective, then $A$ has the extension property.
\end{theorem}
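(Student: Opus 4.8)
The plan is to transport the Hamming-isometry equation \cref{eq-main-space-equation} to the character module via the Fourier transform, where the hypothesis on $\hat{A}$ forces all of the modules involved to be cyclic, and then to apply the second part of \Cref{lemma-nontrivial-iff-noncyclic}.

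First I would fix an integer $n$, a code $C\subseteq A^n$ and a Hamming isometry $f\in\Hom_R(C,A^n)$, and build $W$, $\lambda$, $\mu$ and the tuples $\V=(V_1,\dots,V_n)$, $\U=(U_1,\dots,U_n)$ exactly as in \Cref{fig-parametriztion}; here $W$ is finite because $A$ is. By \Cref{thm-isometry-criterium} the isometry hypothesis gives $\sum_{i=1}^n\id_{V_i}=\sum_{i=1}^n\id_{U_i}$ as $\mathbb{C}$-valued functions on $W$. Applying the $\mathbb{C}$-linear Fourier transform $\mathcal{F}$ to both sides and using $\mathcal{F}(\id_V)=\card{V}\,\id_{V^\perp}$ yields
\begin{equation*}
\sum_{i=1}^n\card{V_i}\,\id_{V_i^\perp}=\sum_{i=1}^n\card{U_i}\,\id_{U_i^\perp}
\end{equation*}
as functions on $\hat{W}$. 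Next I would check that every module occurring here is a cyclic right $R$-module: since $V_i^\perp=(\Ker\lambda_i)^\perp=\Img\hat{\lambda_i}$ and $\hat{\lambda_i}$ is $R$-linear, if $\chi_0$ generates the cyclic right $R$-module $\hat{A}$ then $V_i^\perp=\hat{\lambda_i}(\chi_0 R)=(\chi_0\lambda_i)R$ is cyclic, and likewise each $U_i^\perp=\Img\hat{\mu_i}$ is cyclic.

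Then the displayed identity is a solution, involving only cyclic modules, of the weighted equation of \Cref{lemma-nontrivial-iff-noncyclic}, hence it is trivial: after grouping the terms module by module, the two sides agree. Because $\card{V^\perp}=\card{W}/\card{V}$ for any submodule $V\subseteq W$, every term built from a fixed module $X\subseteq\hat{W}$ carries the same coefficient $\card{W}/\card{X}$, so triviality is equivalent to saying that for each submodule $X\subseteq\hat{W}$ the number of indices $i$ with $V_i^\perp=X$ equals the number of indices $i$ with $U_i^\perp=X$. Thus the multisets $\{V_1^\perp,\dots,V_n^\perp\}$ and $\{U_1^\perp,\dots,U_n^\perp\}$ coincide, and since $\perp$ is an inclusion-reversing bijection between the submodules of $W$ and those of $\hat{W}$, so do $\{V_1,\dots,V_n\}$ and $\{U_1,\dots,U_n\}$; that is, $\V=\U$. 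Finally, since $A$ is pseudo-injective, the last sentence of \Cref{thm-isometry-criterium} shows that $f$ extends to a monomial map, and as $n$, $C$ and $f$ were arbitrary, $A$ has the extension property.

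The step that needs care is the inference ``$\mathcal{F}$-transformed equation trivial $\Rightarrow\V=\U$'': one must observe that the coefficient attached to a term, being $\card{W}/\card{X}$, depends only on the module $X$, so the cancellations permitted in \Cref{lemma-nontrivial-iff-noncyclic} can only merge copies of one and the same module, and triviality therefore really does mean equality of multisets of modules. Everything else used here — linearity of $\mathcal{F}$, the value $\mathcal{F}(\id_V)=\card{V}\,\id_{V^\perp}$, $R$-linearity of $\hat{\sigma}$ and the identity $(\Ker\sigma)^\perp=\Img\hat\sigma$, and the bijectivity of $\perp$ — is already recorded in the excerpt.
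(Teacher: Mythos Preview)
Your proof is correct and follows essentially the same route as the paper: transport \cref{eq-main-space-equation} via the Fourier transform to obtain \cref{eq-dual}, observe that $V_i^\perp=\Img\hat{\lambda_i}$ and $U_i^\perp=\Img\hat{\mu_i}$ are cyclic because $\hat{A}$ is, invoke \Cref{lemma-nontrivial-iff-noncyclic}, and conclude via \Cref{thm-isometry-criterium} and pseudo-injectivity. Your write-up is in fact more careful than the paper's on one point: the paper simply asserts that the original solution is trivial if and only if the orthogonal one is, whereas you explicitly justify why triviality of the weighted dual equation forces the multisets $\{V_i^\perp\}$ and $\{U_i^\perp\}$ to coincide (using that the coefficient $\card{V_i}=\card{W}/\card{V_i^\perp}$ depends only on the module $V_i^\perp$).
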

\begin{proof}
	Let $C \subset A^n$ be a code and let $f \in \Hom_R(C,A^n)$ be an isometry. By \Cref{thm-isometry-criterium}, $f$ is extendable if and only if the solution $(\U,\V)$ of \cref{eq-main-space-equation} is trivial.
	Due to the properties of the Fourier transform, \cref{eq-main-space-equation} is equivalent to the following equality of functions defined on $\hat{W}$,
		\begin{equation}\label{eq-dual}
		\sum_{i = 1}^n \card{V_i} \id_{V_i^\perp} = \sum_{i = 1}^n \card{U_i} \id_{U_i^\perp}\;
		\end{equation}
	and the solution of \cref{eq-main-space-equation} is trivial if and only if the corresponding orthogonal solution is trivial.
	The statement of the theorem is a direct consequence of \Cref{lemma-nontrivial-iff-noncyclic} and the fact that the modules $V_i^\perp = \Img \hat{\lambda_i}$, $U_i^\perp = \Img \hat{\mu_i}$, $i \in \myset{n}$, are all cyclic, since so is $\hat{A}$.
\end{proof}

\begin{remark}
	\Cref{thm-cyclic-extendable} is an analogue of \Cref{thm-wood-general-modules}, where instead of the cyclic socle condition we use the cyclic character module condition. Prove that these two conditions are equivalent. In \cite{wood-foundations} it was proven that $\soc(A)$ is cyclic if and only if $A$ can be embedded into $_R \hat{R}$. This means there exists an injective homomorphism of left $R$-modules $\phi : A \rightarrow _R \hat{R}$. Since $\wedge$ is an exact functor, the last is equivalent to the fact that the map $\hat{\phi}: \hat{\hat{R}}_R \cong R_R \rightarrow \hat{A}_R$ is a projective homomorphism of right $R$-modules that is a characterization of cyclicity of $\hat{A}_R$. \comment{Note that $\Ker \phi = A^\perp$ and hence $R_R / A^\perp \cong \hat{A}_R$.}
\end{remark}

\section{Extension theorem for matrix alphabets}\label{sec-main}
An $R$-module $A$ is called \emph{semisimple} (or completely reducible) if $A$ is a direct sum of simple submodules.
\begin{lemma}\label{lemma-semisimple-pseudoinjective}
	If a left $R$-module $A$ is semisimple, then $A$ is pseudo-injective. 
\end{lemma}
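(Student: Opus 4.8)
The plan is to use the classical structure theory of semisimple modules. Recall that a semisimple module $A$ decomposes as $A \cong \bigoplus_{j} S_j^{n_j}$, where the $S_j$ are pairwise non-isomorphic simple left $R$-modules (these are the isotypic components). The key structural fact I would invoke is that every submodule $B \subseteq A$ is again semisimple, and moreover $B$ respects the isotypic decomposition: $B = \bigoplus_j (B \cap S_j^{n_j})$, and $B \cap S_j^{n_j} \cong S_j^{m_j}$ for some $0 \le m_j \le n_j$. Consequently the isomorphism type of a submodule $B$ is completely determined by the multiplicities $(m_j)_j$, and—crucially—any submodule isomorphic to $B$ has a complement in $A$ that is isomorphic to a fixed complement $\bigoplus_j S_j^{n_j - m_j}$ (this uses that over a semisimple ring, or just within a semisimple module, submodules are direct summands, and the Krull--Schmidt / Jordan--H\"older uniqueness of multiplicities).

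With that in hand, the argument is short. Let $B \subseteq A$ and let $\phi, \psi \in \Hom_R(B, A)$ be two embeddings. First I would reduce to showing that for a single embedding $\phi: B \hookrightarrow A$ there is an automorphism carrying $\phi(B)$ onto a \emph{standard} copy of $B$ inside $A$; then composing the two resulting automorphisms (one for $\phi$, one for $\psi$) and adjusting by the isomorphism $\psi\phi^{-1}$ between the two standard copies gives the desired $h$ with $\psi = h\phi$. To produce the automorphism for a single embedding: working isotypic component by isotypic component, $\phi(B) \cap S_j^{n_j}$ is a direct summand of $S_j^{n_j}$, say with complement $T_j$; pick a standard summand $S_j^{m_j} \subseteq S_j^{n_j}$ with complement $S_j^{n_j - m_j}$; choose an isomorphism $\phi(B)\cap S_j^{n_j} \to S_j^{m_j}$ and an isomorphism $T_j \to S_j^{n_j - m_j}$ (both exist since the multiplicities match), and take the direct sum. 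Assembling over $j$ yields an $R$-automorphism of $A$ sending $\phi(B)$ to the standard copy. This is essentially the statement that semisimple modules have the property that any two submodules with isomorphic complements are in the same orbit of $\Aut_R(A)$, which is classical.

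The main obstacle—really the only subtle point—is justifying that isomorphic submodules of $A$ have isomorphic complements, i.e. the cancellation/uniqueness-of-multiplicity step. I would handle this by invoking Jordan--H\"older: the composition factor multiplicities of $A/B$ are determined by those of $A$ and $B$, and since everything in sight is semisimple, $A/B$ itself is semisimple with those multiplicities, hence isomorphic to the standard complement. One should also check the easy but necessary compatibility that decomposing $B$ along the isotypic components of $A$ is legitimate—this follows because $\Hom_R(S_i, S_j) = 0$ for $i \ne j$, so any submodule is the direct sum of its intersections with the isotypic components. Everything else is bookkeeping: gluing the per-component isomorphisms into a global automorphism and checking it restricts correctly, which is routine.
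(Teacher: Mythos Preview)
Your argument is correct and shares the same core idea as the paper's proof: semisimplicity gives complements, isomorphic submodules have isomorphic complements, and the direct sum of the two isomorphisms is the required automorphism of $A$. The paper does this in three lines by directly pairing the two images $\phi(B)$ and $\psi(B)$ (skipping your detour through a ``standard copy'') and simply asserting the complement-cancellation step that you carefully justify via isotypic components and Jordan--H\"older.
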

\begin{proof}
	Let $N, M \subseteq A$ be two submodules and let $\psi: N \rightarrow M$ be an isomorphism. Since $A$ is semisimple, there exist $N', M' \subseteq A$ such that $A = N \oplus N' = M \oplus M'$. Since $N \cong M$, there is an isomorphism $\phi: N' \rightarrow M'$. Then $\psi$ extends to the automorphism $\psi\times \phi: A = N \oplus N' \rightarrow M \oplus M' = A$.
\end{proof}

It is proved in \cite[p. 656]{lang} that each module $M$ over the ring $R = \M_m(\mathbb{F}_q)$ is semisimple and is isomorphic to $\M_{m \times k}(\mathbb{F}_q)$ for some $k$. Call $k$ the dimension of $M$ and denote $\dim M = k$.
We need the following lemmas to prove an extension theorem for $R$-linear codes over $M$.

\begin{lemma}\label{lemma-binomial-sums}
	The following equalities hold,
	\begin{equation*}
	\sum_{i = 0}^{t-1} (-1)^i q^{\binom{i}{2}} \binom{t}{i}_q = (-1)^{t-1} q^{\binom{t}{2}}\;,
	\end{equation*}
	\begin{equation*}
	\sum_{i = 0}^t q^{\binom{i}{2}} \binom{t}{i}_q = \prod_{i=0}^{t-1} (1 + q^i)\;.
	\end{equation*}
\end{lemma}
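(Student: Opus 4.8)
The plan is to obtain both equalities as specializations of the $q$-binomial theorem (Gauss's identity), which states that, as polynomials in a variable $x$ over $\mathbb{Q}(q)$,
\begin{equation*}
\prod_{j=0}^{t-1}(1 + q^j x) = \sum_{i=0}^{t} q^{\binom{i}{2}} \binom{t}{i}_q x^i\;.
\end{equation*}
If one prefers a self-contained argument, this identity is proved by induction on $t$ using the $q$-Pascal recurrence $\binom{t}{i}_q = \binom{t-1}{i-1}_q + q^i\binom{t-1}{i}_q$; the only care needed is in tracking the exponents $\binom{i}{2}$, and there is no genuine obstacle. Here $t$ is understood to be a positive integer, as in the application to the code length $\prod_{i=1}^{k-1}(1+q^i)$.

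For the second equality I would substitute $x = 1$: the left-hand side becomes $\prod_{j=0}^{t-1}(1+q^j)$ and the right-hand side becomes $\sum_{i=0}^{t} q^{\binom{i}{2}}\binom{t}{i}_q$, which is exactly the claim.

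For the first equality I would substitute $x = -1$, giving
\begin{equation*}
\prod_{j=0}^{t-1}(1 - q^j) = \sum_{i=0}^{t}(-1)^i q^{\binom{i}{2}}\binom{t}{i}_q\;.
\end{equation*}
Since $t \geq 1$, the factor indexed by $j = 0$ on the left is $1 - q^0 = 0$, so the whole product vanishes. Isolating the top term $i = t$ on the right, where $\binom{t}{t}_q = 1$, yields
\begin{equation*}
0 = \sum_{i=0}^{t-1}(-1)^i q^{\binom{i}{2}}\binom{t}{i}_q + (-1)^t q^{\binom{t}{2}}\;,
\end{equation*}
and rearranging gives $\sum_{i=0}^{t-1}(-1)^i q^{\binom{i}{2}}\binom{t}{i}_q = (-1)^{t-1} q^{\binom{t}{2}}$, as desired.

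The computation is elementary once the $q$-binomial theorem is invoked; the only point worth flagging is making sure the base of the exponent in $q^{\binom{i}{2}}$ matches exactly the normalization of Gauss's identity one cites, since several closely related forms of the $q$-binomial theorem appear in the literature and a mismatch there would change the sign and power of $q$ in the answer.
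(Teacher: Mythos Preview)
Your proof is correct and follows exactly the same approach as the paper: cite the Cauchy (Gauss) $q$-binomial theorem $\prod_{j=0}^{t-1}(1+q^jx)=\sum_{i=0}^{t}q^{\binom{i}{2}}\binom{t}{i}_q x^i$ and specialize to $x=1$ and $x=-1$. The only difference is that you spell out the rearrangement after the $x=-1$ substitution, whereas the paper leaves that implicit.
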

\begin{proof}
	Use a well-known Cauchy binomial theorem,
	\begin{equation*}\prod_{i=0}^{t-1} (1 + x q^i) = \sum_{i = 0}^t q^{\binom{i}{2}} \binom{t}{i}_q x^i\;.
	\end{equation*}
	To get the equalities in the statement, put $x = -1$ and $x = 1$.
\end{proof}

\begin{lemma}\label{lemma-number-of-submodules}
	Let $R$ be a matrix module over $\mathbb{F}_q$, let $M$ be a $t$-dimensional $R$-module and let $X$ be a $p$-dimensional submodule of $M$. For each  $i \in \{p,\dots, t\}$ we have,
	\begin{equation*}
	\card{ \{ V \subseteq M \mid X \subseteq V, \dim V = i \} } = \binom{t - p}{i - p}_q
	\end{equation*}
\end{lemma}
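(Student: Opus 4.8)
The plan is to translate the counting problem into counting $\mathbb{F}_q$-subspaces and then apply the standard $q$-binomial count. By the result of \cite{lang} cited above, $M \cong \M_{m \times t}(\mathbb{F}_q) \cong S^{\oplus t}$ as a left $R$-module, where $S = \mathbb{F}_q^m$ (column vectors, with $R$ acting by left multiplication) is, up to isomorphism, the unique simple left $R$-module; by Schur's lemma $\operatorname{End}_R(S) \cong \mathbb{F}_q$. Put $T = \Hom_R(S, M)$; this is a $t$-dimensional vector space over $\operatorname{End}_R(S) \cong \mathbb{F}_q$. The classical dictionary for isotypic semisimple modules says that $V \mapsto \Hom_R(S, V)$ and $N \mapsto \sum_{\varphi \in N} \varphi(S)$ are mutually inverse, inclusion-preserving bijections between the set of $R$-submodules of $M$ and the set of $\mathbb{F}_q$-subspaces of $T$, and that a $j$-dimensional subspace of $T$ corresponds to a submodule isomorphic to $S^{\oplus j}$, i.e.\ a submodule of dimension $j$ in the sense of \cite{lang}. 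The only points needing verification are that a submodule $V$, being semisimple and hence a sum of copies of $S$, satisfies $V = \sum_{\varphi \in \Hom_R(S,V)} \varphi(S)$, and that $\mathbb{F}_q$-linear independence of $\varphi_1, \dots, \varphi_j$ makes $\varphi_1(S) + \dots + \varphi_j(S)$ a direct sum; both are routine.

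Granting this, set $\tilde{X} = \Hom_R(S, X)$, which by the dimension statement is a $p$-dimensional subspace of $T$. The bijection above restricts to a bijection between $\{ V \subseteq M \mid X \subseteq V,\ \dim V = i \}$ and $\{ N \subseteq T \mid \tilde{X} \subseteq N,\ \dim_{\mathbb{F}_q} N = i \}$. Passing to the quotient $T / \tilde{X}$, which has dimension $t - p$, the subspaces $N \supseteq \tilde{X}$ of dimension $i$ correspond to the $(i-p)$-dimensional subspaces of $T/\tilde{X}$, and the number of the latter is $\binom{t-p}{i-p}_q$ by the definition of the Gaussian binomial coefficient. This yields the asserted formula.

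The only substantive ingredient is the submodule--subspace dictionary; everything after it is the routine $q$-binomial count, which I expect to present no obstacle. If one prefers to avoid general semisimple theory, the same reduction can be carried out concretely with $M = \M_{m \times t}(\mathbb{F}_q)$: associate to a submodule $V$ the subspace $N_V = \sum_{a \in V} \mathrm{rs}(a) \subseteq \mathbb{F}_q^t$ spanned by the rows of its elements, and check directly that $V = \{ a \in M \mid \mathrm{rs}(a) \subseteq N_V \}$ and $\dim_R V = \dim_{\mathbb{F}_q} N_V$. Keeping this identification and the dimension bookkeeping straight is the only place that requires care.
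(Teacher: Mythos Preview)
Your proposal is correct and follows essentially the same route as the paper---pass to a quotient and invoke the Gaussian binomial count---though you make the submodule-to-$\mathbb{F}_q$-subspace dictionary explicit, whereas the paper simply cites as well known that a $t$-dimensional $R$-module has $\binom{t}{k}_q$ submodules of dimension $k$. The paper works directly with the $R$-module quotient $M/X$ and the correspondence $V \leftrightarrow V/X$ rather than first translating to $\mathbb{F}_q$-vector spaces via $\Hom_R(S,-)$.
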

\begin{proof}
	It is a well known fact that for any $k \in \{0, \dots, t\}$ there are $\binom{t}{k}_q$ submodules of $M$ of dimension $i$. The number of such submodules, that contain $X$ is equal to the number of submodules of dimension $i - \dim X$ in $M/X$ that contain $\{0\}$, which is always the case. In other words, $\card{ \{ V \subseteq M \mid X \subseteq V, \dim V = i \} }  = \card{ \{ V \subseteq M/X \mid \dim V = i - p \} } = \binom{\dim M/X}{i - p}_q = \binom{t - p}{i - p}_q$.
\end{proof}

The next proposition is an improvement of \Cref{thm-wood-matrix-module}. Note that the code length $K$ in \Cref{thm-wood-matrix-module} depends on $k$ (the alphabet parameter) whereas in the following proposition the code length $N$ depends on $m$ (the ring parameter) and $N$ is not greater than $K$. In our proof this improvement is easily obtained from the author's construction in \cite{wood-foundations}, however it does not appear in the original statement of the author. 
\begin{proposition}\label{thm-matrix-module-bound}
	Let $R = \M_{m}(\mathbb{F}_q)$ and let $A = \M_{m \times k}(\mathbb{F}_q)$ be a left $R$-module.
	
	If $k \leq m$, then the alphabet $A$ has the extension property.
	
	If $k > m$, there exist a linear code $C \subset A^N$, $N = \prod_{i=1}^{m} (1 + q^i)$, and a map $f \in \Hom_R(C,A^N)$ that is a Hamming isometry, but there is no monomial transformation extending $f$.
	
	For any $n < N$, each Hamming isometry $f \in \Hom_R(C,A^n)$ is extendable.
\end{proposition}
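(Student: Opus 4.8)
The plan is to treat all three assertions through the isometry criterion of \Cref{thm-isometry-criterium} together with its Fourier-dual form \cref{eq-dual}, exploiting that every module over $R=\M_m(\mathbb{F}_q)$ is semisimple. Throughout I write $S=\M_{m\times 1}(\mathbb{F}_q)$ for the unique simple $R$-module, so that $A\cong S^{k}$ and $\dim M$ records the number of simple summands of a module $M$. Since $_RR\cong S^{m}$, a semisimple module is cyclic exactly when its dimension is at most $m$; this is the dichotomy that separates the cases $k\le m$ and $k>m$.

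For the first assertion I would deduce the extension property from \Cref{thm-cyclic-extendable}. Being semisimple, $A$ is pseudo-injective by \Cref{lemma-semisimple-pseudoinjective}, and $\soc(A)=A\cong S^{k}$ is cyclic precisely when $k\le m$; equivalently, as noted in the remark after \Cref{thm-cyclic-extendable}, $\hat A$ is then a cyclic right $R$-module. Both hypotheses of \Cref{thm-cyclic-extendable} hold, so $A$ has the extension property.

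For the counterexample at length $N$ when $k>m$, hence $k\ge m+1$, I would build an explicit nontrivial solution of \cref{eq-main-space-equation} and realize it as kernels. Take $W=M=S^{m+1}$, whose submodule lattice is the subspace lattice of $\mathbb{F}_q^{m+1}$. As $M$ is non-cyclic, running the inclusion--exclusion of \Cref{lemma-nontrivial-iff-noncyclic} over the hyperplanes of $M$ and collecting equal terms yields a relation $\sum_X a_X\,\id_X=0$ over all submodules $X\subseteq M$ in which, by a standard computation on the subspace lattice, a submodule of dimension $i$ carries coefficient $a_X=(-1)^{m+1-i}q^{\binom{m+1-i}{2}}$. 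Placing the submodules with positive coefficient into $\V$ and those with negative coefficient into $\U$, \Cref{lemma-number-of-submodules} supplies $\binom{m+1}{i}_q$ submodules in each dimension; the first identity of \Cref{lemma-binomial-sums} with $t=m+1$, augmented by its $i=m+1$ term, shows the positively and negatively weighted parts balance, and the second identity evaluates their common total, so that each side has exactly
\[
\tfrac12\sum_{i=0}^{m+1}q^{\binom{i}{2}}\binom{m+1}{i}_q=\prod_{i=1}^{m}(1+q^{i})=N
\]
terms. Every module appearing has codimension at most $m+1\le k$, so each quotient $W/V_i$, $W/U_j$ is isomorphic to some $S^{d}$ with $d\le k$ and embeds into $A$; this lets me choose $\lambda_i,\mu_j\in\Hom_R(W,A)$ with exactly these kernels and with $\bigcap_i V_i=\bigcap_j U_j=\{0\}$, so that $\lambda$ parametrizes a code $C\cong W\subset A^{N}$ and $\mu=f\lambda$ defines $f\in\Hom_R(C,A^{N})$. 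By \Cref{thm-isometry-criterium} this $f$ is a Hamming isometry, and since $M$ itself occurs only in $\V$ we have $\V\neq\U$, so $f$ admits no monomial extension.

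The minimality statement is the main obstacle. Here I would show that every nontrivial solution of \cref{eq-main-space-equation} whose modules arise as kernels of maps into $A$ has at least $N$ terms per side, uniformly in $k$. Writing such a solution as $\sum_X a_X\id_X=0$, evaluation at each $w\in W$ gives the containment relations $\sum_{X\supseteq\langle w\rangle}a_X=0$, one for every cyclic submodule. Choosing $V_1$ maximal with respect to inclusion among the modules occurring, the argument in the proof of \Cref{lemma-nontrivial-iff-noncyclic} yields $V_1=\bigcup_j(V_1\cap U_j)$, so by \Cref{lemma-cyclic-if-covering} $V_1$ is non-cyclic and $\dim V_1=d\ge m+1$. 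Restricting the equation and the containment relations to the sublattice below $V_1$ and solving them downward from the top forces the coefficients of the submodules of $V_1$ to reproduce, up to a common factor, the pattern $(-1)^{d-\dim X}q^{\binom{d-\dim X}{2}}$ found above; the Cauchy-binomial identities of \Cref{lemma-binomial-sums} then bound the positive weight below by $\prod_{i=1}^{d-1}(1+q^{i})\ge\prod_{i=1}^{m}(1+q^{i})=N$. Hence no nontrivial solution, and so no unextendable isometry, can occur for a code of length $n<N$, and every such $f$ is extendable. The delicate point, which I expect to require the most care, is controlling the contributions of modules not contained in $V_1$ when restricting to $V_1$, and checking that enlarging $d$ beyond $m+1$ can only increase the count.
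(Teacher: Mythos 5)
Your first two assertions are handled correctly. The case $k\le m$ is argued exactly as in the paper (semisimplicity gives pseudo-injectivity via \Cref{lemma-semisimple-pseudoinjective}, cyclicity of $\hat A$ gives the extension property via \Cref{thm-cyclic-extendable}). For the length-$N$ counterexample you take a genuinely different route: where the paper simply cites Wood's construction (\Cref{thm-wood-matrix-module}) for the intermediate alphabet $B=\M_{m\times (m+1)}(\mathbb{F}_q)$ and embeds it into $A$, you rebuild the example from scratch, collecting the inclusion--exclusion relation of \Cref{lemma-nontrivial-iff-noncyclic} over the hyperplanes of an $(m+1)$-dimensional module $W$ into the M\"obius-function pattern, counting terms with \Cref{lemma-binomial-sums} and \Cref{lemma-number-of-submodules}, and realizing the modules as kernels of maps into $A$ (possible since all codimensions are at most $m+1\le k$). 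This is self-contained, in effect reconstructs Wood's example, and your non-extendability argument ($W$ itself lies in $\V$ but not in $\U$, so $\V\ne\U$) replaces the paper's ``all-zero column'' observation; it is valid, including the check that $\bigcap_i V_i=\{0\}$, which holds because all codimension-two submodules occur among the $V_i$.

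The genuine gap is in the minimality claim, which is the heart of the proposition and of the paper's proof. Your central assertion --- that for a nontrivial solution of \cref{eq-main-space-equation}, ``solving downward from the top'' forces the net coefficients of the submodules of a maximal module $V_1$ to be proportional to the pattern $(-1)^{d-\dim X}q^{\binom{d-\dim X}{2}}$ --- is false for nontrivial solutions in general. For example, take two distinct $(m+1)$-dimensional submodules $M_1\ne M_2$ of a larger $W$ and add the two M\"obius-pattern solutions supported on them: the result is nontrivial (each $M_i$ survives with coefficient $1$), but below $V_1=M_1$ the coefficients of submodules of $M_1\cap M_2$ are doubled relative to those of submodules not contained in $M_2$, so no common factor exists. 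The pattern can therefore only be forced for solutions of \emph{minimal} length, and minimality must enter the argument in an essential way, which your sketch never arranges. The paper uses it in three distinct places: (i) a secondary minimization --- among the nontrivial solutions of minimal length $n$, choose one minimizing $r=\max\dim$ --- which is exactly what guarantees that the restriction of the solution to any submodule $X$ with $\dim X<r$ is \emph{trivial}; (ii) minimality of $n$ to ensure that no module occurs on both sides (the paper's ``either $b=0$ or $c=0$''), which pins down the signs; (iii) minimality again to reduce, without loss of generality, to a solution all of whose modules lie inside $V_1$, which is precisely the ``delicate point'' you defer about modules not contained in $V_1$. With these in place the paper runs a finite induction on $t$, restricting to submodules $X$ of dimension $r-t-1$, counting the multiplicity of $\id_X$ on each side with \Cref{lemma-number-of-submodules} and evaluating with \Cref{lemma-binomial-sums}. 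Your proposal states the conclusion of this induction but supplies none of its mechanism; as it stands, the lower bound $n\ge N$ --- and with it the third assertion of the proposition --- is not proved.
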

\begin{proof}
For any $k$, since $A$ is semisimple, by \Cref{lemma-semisimple-pseudoinjective}, $A$ is pseudo-injective.
If $k\leq m$, the right $R$-module $\hat{A}$ is cyclic, since $\dim \hat{A} = \dim A = k \leq m$. From \Cref{thm-cyclic-extendable}, $A$ has the extension property.

To construct a code of the length $N$ we do the following. Let $C'$ be the code over the alphabet $B = M_{m \times m+1}(\mathbb{F}_q)$ and let $f \in \Hom_R(C',B^N)$ be the unextendable isometry from \Cref{thm-wood-matrix-module}. Choosing this alphabet, we have $K = N$. 
Since $k > m$, in $A$ there exists a submodule isomorphic to $B$, so $C'$ can be considered as a code in $A^N$ and $f \in \Hom_R(C', A^N)$. Due to the construction of the author in \cite{wood-foundations}, the code $C'$ has all zero column and $f(C')$ does not. Therefore $f$ is unextendable.

Let $k > m$.
Let $n'$ be the minimum value of the code length for which there exists and unextendable isometry $f \in \Hom_R(C,A^{n'})$. By \Cref{thm-isometry-criterium}, there exists a non-trivial solution of \cref{eq-main-space-equation}.
Hence, the minimum length $n$ of a nontrivial solution of \cref{eq-main-space-equation} is not greater than $n'$.
Consider a solution of the length $n$ which have the minimum value $\max\{ \dim V_i, \dim U_i \mid i \in \myset{n} \}$, and denote this value by $r$.
From \Cref{lemma-nontrivial-iff-noncyclic}, $r >m$.
Without loss of generality, let $\dim {V_1} = r$.

Introduce a new notation. Denote $I_j = \{ i \mid \dim V_i < r - j \}$, $J_j = \{ i \mid \dim U_i < r - j \}$ and
\begin{equation*}
\Sigma_j= \sum_{\dim V = r - j} \id_V\;,
\end{equation*}
for $j \in \{ 0, \dots, r \}$, where the summation is over all the submodules in $V_1$ of the given dimension.
Calculate the restriction of \cref{eq-main-space-equation} on the module $V_1$,
\begin{equation*}
a\Sigma_0 = \sum_{i \in J_0} \id_{U_i \cap V_1} - \sum_{i \in I_0} \id_{V_i \cap V_1}\;,
\end{equation*}
where by $a \geq 1$ we denote the number of modules $V_1$ in the left part of \cref{eq-main-space-equation}.
This is a nontrivial solution of the length $n$ and the maximum dimension $r$. Evidently, since the length $n$ is the minimal, $U_i \cap V_1 \subset V_1$ for all $i \in J_0$ and $V_i \cap V_1 \subset V_1$ for all $i \in I_0$, so, without loss of generality, let $U_i, V_i \subseteq V_1$, for $i \in \myset{n}$.

Say that on the $t$-step, $0 \leq t < r$, we have proved that \cref{eq-main-space-equation} is of the form,
\begin{equation*}
	a \sum_{i = 0}^t (-1)^iq^{\binom{i}{2}} \Sigma_i  =  \sum_{i \in J_t} \id_{U_i} - \sum_{i \in I_t} \id_{V_i}\;.
\end{equation*}
For $t = 0$ this is true.
Let $X \subset V_1$ be of dimension $r - t - 1$. Restrict the equation on $X$,
\begin{align*}
a \sum_{i = 0}^t (-1)^i q^{\binom{i}{2}} \sum_{\dim V = r - i}\id_{V \cap X} 
= \sum_{i \in J_{t}} \id_{U_i\cap X} - \sum_{i \in I_{t}} \id_{V_i\cap X} \;.
\end{align*}
The dimension of $X$ is smaller than $r$, so the restricted solution is trivial.
Calculate the number of $\id_X$ terms from the left and from the right.
Denote  $b = \card{\{ i \in J_t \mid X = U_i \}}$ and $c = \card{\{ i \in I_t \mid X = V_i \}}$. Note that either $b=0$ or $c=0$. Using \Cref{lemma-number-of-submodules} and \Cref{lemma-binomial-sums},
\begin{equation*}
a \sum_{i = 0}^t (-1)^i q^{\binom{i}{2}} \binom{t+1}{i}_q = a(-1)^t q^{\binom{t+1}{2}} \binom{t+1}{i}_q = b - c\;,
\end{equation*}
and therefore $c = 0$ if $t$ is even and $b = 0$ if $t$ is odd.

All the submodules of $V_1$ of dimension $r - t - 1$ are presented from the left or from the right side of \cref{eq-main-space-equation}, depending on the parity of $t$, with the same multiplicity. Considering this fact, we rewrite \cref{eq-main-space-equation} in the form,
\begin{equation*}
a \sum_{i = 0}^{t+1} (-1)^i q^{\binom{i}{2}} \Sigma_i = \sum_{i \in J_{t+1}} \id_{U_i} - \sum_{i \in I_{t+1}} \id_{V_i}\;.
\end{equation*}
On the output of $t = r - 1$ step we get,
\begin{equation*}
a \sum_{i = 0}^{r} (-1)^i q^{\binom{i}{2}} \Sigma_i = \sum_{i \in J_{r}} \id_{U_i} - \sum_{i \in I_{r}} \id_{V_i} \equiv 0\;.
\end{equation*}
The length of the equation is $\frac{1}{2}a\sum_{i = 0}^{r} q^{\binom{i}{2}} \binom{r}{i}_q$. Since the equation has the minimal length, $a = 1$ and $r = m + 1$.
From \Cref{lemma-binomial-sums}, $n = \frac{1}{2} \prod_{i=0}^{m} (1 + q^i) = \prod_{i=1}^{m} (1 + q^i) = N$. Therefore $n' \geq n = N$.
\end{proof}

\section{Extension theorem for MDS codes}
There is a famous Singleton bound, that states that for a code $C \subseteq A^n$, $\card{C} \leq \card{A}^{n - d + 1}$, where $d$ is the minimum distance of $C$. When a code $C$ attains the bound, it is called an MDS code. The value $k = n - d +1$ is called a dimension $C$ and the code is said to be an $(n,k)_A$ MDS code.

An alternative definition is the following. A code $C \subseteq A^n$ is MDS if and only if a restriction of $C$ on any $k$ columns is isomorphic to $A^k$. In other words, any $k$ columns of $C$ can be taken as an information set of the code. We interpret this definition in terms of modules $\V = (V_1,\dots,V_n)$.
\begin{lemma}\label{lemma:mds-modules}
	Let $C$ be an $(n, k)_A$ MDS code. For each subset $I \subseteq \myset{n}$ of size $k$, $\sum_{i \in I} V_i^{\perp} = \hat{W}_R$. Moreover, $\card{V_i^\perp} = \card{A}$ for all $i \in \myset{n}$.
\end{lemma}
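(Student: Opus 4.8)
The plan is to reformulate the MDS property as a statement about the coordinate projections $\lambda_i$ and then dualise via the character functor $\wedge$. First I would record that, since equality in the Singleton bound gives $\card{C} = \card{A}^k$, the alternative description of MDS codes quoted above says precisely that for every $I \subseteq \myset{n}$ with $\card{I} = k$ the map $\lambda_I := (\lambda_i)_{i \in I} \colon W \to A^I$ is an isomorphism of left $R$-modules: the restriction of $C$ to the columns in $I$ is $\lambda_I(W) \subseteq A^I$, it is isomorphic to $A^k$ by hypothesis, hence $\card{\lambda_I(W)} = \card{A}^k = \card{A^I}$, so $\lambda_I(W) = A^I$ and $\lambda_I$ is onto; injectivity then follows by counting, as $\card{W} = \card{C} = \card{A}^k$.

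Next I would dualise. Because $\wedge$ is an exact contravariant functor, $\hat\lambda_I \colon \widehat{A^I} \to \hat W$ is again an isomorphism, in particular surjective. Under the canonical identification $\widehat{A^I} \cong \bigoplus_{i \in I} \hat A$, a character $\chi = (\chi_i)_{i \in I}$ satisfies $\hat\lambda_I(\chi) = \chi \circ \lambda_I$, which as a function on $W$ sends $w$ to $\prod_{i \in I} \chi_i(\lambda_i(w))$; writing $\hat W$ additively this is $\sum_{i \in I} \hat\lambda_i(\chi_i)$. Hence $\Img \hat\lambda_I = \sum_{i \in I} \Img \hat\lambda_i = \sum_{i \in I} V_i^\perp$, using $V_i^\perp = (\Ker \lambda_i)^\perp = \Img \hat\lambda_i$ as established above. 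Surjectivity of $\hat\lambda_I$ then yields $\sum_{i \in I} V_i^\perp = \hat W_R$, which is the first claim.

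For the ``moreover'' part, fix $i \in \myset{n}$ and choose any $I$ of size $k$ with $i \in I$ (possible since $k \le n$). Then $\lambda_i$ is the composite of the isomorphism $\lambda_I \colon W \to A^I$ with the $i$-th coordinate projection $A^I \to A$, hence $\lambda_i$ is surjective; by exactness of $\wedge$ the map $\hat\lambda_i \colon \hat A \to \hat W$ is then injective, so $V_i^\perp = \Img \hat\lambda_i \cong \hat A$. Since $\card{\hat A} = \card{A}$ for finite $A$, we conclude $\card{V_i^\perp} = \card{A}$.

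The only genuinely fiddly point is the bookkeeping in the second paragraph, namely making the identification $\widehat{A^I} \cong \bigoplus_{i \in I} \hat A$ explicit and checking that $\hat\lambda_I$ decomposes as the ``sum'' of the maps $\hat\lambda_i$; this is routine once one writes characters multiplicatively and passes to additive notation on $\hat W$. Everything else is a direct translation of the definitions together with facts already recorded about the functor $\wedge$ and orthogonal submodules.
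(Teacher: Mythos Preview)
Your argument is correct. For the first claim it is essentially the paper's proof in different packaging: the paper deduces $\bigcap_{i\in I} V_i=\{0\}$ from injectivity of $\lambda_I$ and then applies the identity $(V\cap U)^\perp=V^\perp+U^\perp$ directly, whereas you pass the isomorphism $\lambda_I$ through the functor $\wedge$ and read off $\sum_{i\in I}\Img\hat\lambda_i=\hat W$; these are the same computation.

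For the ``moreover'' part you take a genuinely different route. The paper argues by counting: it combines $\card{V_i^\perp}\le\card{A}$ with $\sum_{i\in I}V_i^\perp=\hat W$ and $\card{\hat W}=\card{A}^k$ to force equality in each factor. Your observation that each $\lambda_i$ is already \emph{surjective} (as a coordinate of the isomorphism $\lambda_I$), so that $\hat\lambda_i$ is injective and $V_i^\perp\cong\hat A$, is more direct and gives a little more, namely an actual isomorphism $V_i^\perp\cong\hat A$ rather than just equality of cardinalities. The paper's counting argument, on the other hand, only needs injectivity of $\lambda_I$, not the full isomorphism you established in your first paragraph.
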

\begin{proof}
	Let $I \subseteq \myset{n}$ be a subset with $k$ elements. Let $C'$ be a code obtained from $C$ by keeping only coordinates from $I$. The map $\lambda' = (\lambda_i)_{i \in I}$, $\lambda': W \rightarrow A^k$ is a parametrization of $C'$. Since $C$ is MDS, $\lambda'$ is injective, which implies $\bigcap_{i \in I} V_i = \{0\}$. Calculating the orthogonal, we get $\sum_{i \in I} V_i^\perp = \hat{W}_R$.
	
	We know that all the modules $W,C,C'$ are isomorphic to $A^k$. Thus there is an isomorphism of right $R$-modules $\hat{W}\cong \hat{A}^k$. Also, $\card{V_i^\perp} \leq \card{A} = \card{\hat{A}_R}$ and for any $i,j \in \myset{n}$, $\card{V_i^\perp + V_j^\perp} = \card{V_i^\perp}\card{V_j^\perp}/\card{V_i^\perp \cap V_j^\perp}$. Combining all the facts, we get $\card{V_i^\perp} = \card{A}$.
\end{proof}

The next lemma shows that the condition of pseudo-injectivity in \Cref{thm-isometry-criterium} can be omitted if a code is MDS.

\begin{lemma}\label{lemma:mds-pseudoinjectivity}
		Let $C$ be an $(n, k)_A$ MDS code and let $f \in \Hom_R(C,A^n)$. If $\V = \U$, then $f$ extends to a monomial map.
\end{lemma}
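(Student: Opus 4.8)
The plan is to show that the MDS hypothesis forces every coordinate projection $\lambda_i$ and every $\mu_i$ to be surjective onto $A$; this surjectivity will play exactly the role that pseudo-injectivity played in the second part of \Cref{thm-isometry-criterium}. First, since $\V = \U$, fix a permutation $\pi \in S_n$ with $U_i = V_{\pi(i)}$ for every $i$. By \Cref{lemma:mds-modules} we have $\card{V_i^\perp} = \card{A}$ for all $i$, hence also $\card{U_i^\perp} = \card{V_{\pi(i)}^\perp} = \card{A}$. Using the identity $(\Ker\lambda_i)^\perp = \Img\hat{\lambda_i}$, this means $\Img\hat{\lambda_i}$ is a submodule of $\hat{W}$ of cardinality $\card{\hat{A}}$, so $\hat{\lambda_i}\colon \hat{A}_R \to \hat{W}_R$ is injective; since $\wedge$ is an exact contravariant functor, $\lambda_i$ is surjective. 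The same argument applied to $\mu_i$ shows that each $\mu_i$ is surjective as well.

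Next I would pass to the induced maps on quotients. For each $i$, the map $\lambda_i$ factors as $\lambda_i = \bar{\lambda_i}\circ q_i$, where $q_i\colon W\to W/V_i$ is the canonical projection and $\bar{\lambda_i}\colon W/V_i\to A$ is injective (by $V_i = \Ker\lambda_i$); being also surjective, $\bar{\lambda_i}$ is an isomorphism. Likewise $\mu_i = \bar{\mu_i}\circ q'_i$ with $q'_i\colon W\to W/U_i$ and $\bar{\mu_i}\colon W/U_i\to A$ an isomorphism. Because $U_i = V_{\pi(i)}$, the maps $\bar{\mu_i}$ and $\bar{\lambda_{\pi(i)}}$ have the common domain $W/V_{\pi(i)}$, so I may set $g_i := \bar{\mu_i}\circ\bar{\lambda_{\pi(i)}}^{-1} \in \Aut_R(A)$.

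Finally I would assemble the monomial map. Define $h\colon A^n\to A^n$ by $h(a) = (g_1(a_{\pi(1)}),\dots,g_n(a_{\pi(n)}))$, which is monomial by construction. For each $i$ one checks $g_i\circ\lambda_{\pi(i)} = \bar{\mu_i}\circ\bar{\lambda_{\pi(i)}}^{-1}\circ\bar{\lambda_{\pi(i)}}\circ q_{\pi(i)} = \bar{\mu_i}\circ q_{\pi(i)} = \mu_i$, where the last equality uses that the projection $q_{\pi(i)}\colon W\to W/V_{\pi(i)}$ is the same map as $W\to W/U_i$. Hence $h\lambda = \mu = f\lambda$, and since $\lambda(W) = C$ this gives $h|_C = f$, so $f$ extends to the monomial map $h$.

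The only step carrying real content is the first one: deducing from $\card{V_i^\perp} = \card{A}$ (the MDS input of \Cref{lemma:mds-modules}) that every $\lambda_i$ and every $\mu_i$ is onto $A$. Once that is in hand, the construction of the automorphisms $g_i$ is the routine bookkeeping already present in \Cref{thm-isometry-criterium}, with surjectivity of the projections substituting for pseudo-injectivity and no compatibility between $\lambda$ and $\mu$ beyond $\Ker\lambda_{\pi(i)} = \Ker\mu_i$ being needed.
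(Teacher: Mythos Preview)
Your proof is correct and follows essentially the same approach as the paper's: use \Cref{lemma:mds-modules} to force each coordinate map $\lambda_i$ (and then, via $U_i=V_{\pi(i)}$, each $\mu_i$) to be surjective onto $A$, then build the automorphisms $g_i=\bar{\mu_i}\bar{\lambda_{\pi(i)}}^{-1}$ exactly as in the second part of \Cref{thm-isometry-criterium}. The only cosmetic difference is that the paper derives surjectivity directly from $\card{\Img\lambda_i}=\card{(\Ker\lambda_i)^\perp}=\card{A}$, whereas you route through the dual ($\hat{\lambda_i}$ injective, hence $\lambda_i$ surjective by exactness of $\wedge$); both arguments are equivalent for finite modules.
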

\begin{proof}
	The proof is almost identical to the second part of the proof of \Cref{thm-isometry-criterium}.
	Let $\sigma, \tau \in \Hom_R(W,A)$ be two maps that parametrize a column in $C$ and a column in $f(C)$ correspondingly. Since $C$ is an MDS code, from \Cref{lemma:mds-modules}, $\Img \sigma = A$, because $\card{\Img \sigma} = \card{(\Ker \sigma)^\perp} = \card{A}$.
	
	Let $\Ker \sigma = \Ker \tau = N \subseteq W$. This implies $\Img \tau = \Img \sigma = A$.
	Consider the canonical isomorphisms $\bar{\sigma}, \bar{\tau}: W/N \rightarrow A$.
	The map $h\in \Aut_R(A)$, defined as $h = \bar{\tau}\bar{\sigma}^{-1}$, satisfies the equality $h \sigma = \tau$.
\end{proof}

\begin{theorem}\label{thm-mds-extension-theorem}
	Let $R$ be a ring with identity and let $A$ be a finite left $R$-module. Let $C$ be an $(n, k)_A$ MDS code, $k\neq 2$. Each Hamming isometry $f \in \Hom_R(C,A^n)$ extends to a monomial map. 
\end{theorem}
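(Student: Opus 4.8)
The plan is to reduce the statement to a triviality property of the isometry equation \eqref{eq-main-space-equation} and then exploit the rigidity of MDS codes. By \Cref{thm-isometry-criterium}, $f$ is a Hamming isometry precisely when \eqref{eq-main-space-equation} holds, and by \Cref{lemma:mds-pseudoinjectivity} it is enough to show that \eqref{eq-main-space-equation} forces $\V = \U$; so the whole theorem amounts to: for an $(n,k)_A$ MDS code with $k \neq 2$, every solution of \eqref{eq-main-space-equation} is trivial. First I would record that $f(C)$ is again an $(n,k)_A$ MDS code, since $f$ preserves the minimum distance and $\card{f(C)} = \card{C}$; hence \Cref{lemma:mds-modules} applies to both $\V$ and $\U$, giving that any $k$ of the modules $V_i^\perp$ (resp.\ $U_i^\perp$) sum to $\hat W$, necessarily as a direct sum by cardinality count, and that $\card{V_i} = \card{U_i} = \card{A}^{k-1}$ for all $i$. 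The case $k = 1$ is immediate: all $V_i = U_i = \{0\}$, so $\V = \U$.

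For $k \geq 3$ I would pass to the dual equation. By the Fourier transform, as in the proof of \Cref{thm-cyclic-extendable}, \eqref{eq-main-space-equation} is equivalent to $\sum_i \card{V_i}\,\id_{V_i^\perp} = \sum_i \card{U_i}\,\id_{U_i^\perp}$, and since every coefficient equals $\card{A}^{k-1}$ this is just $\sum_i \id_{P_i} = \sum_i \id_{Q_i}$ with $P_i = V_i^\perp$, $Q_i = U_i^\perp$, each of cardinality $\card{A}$; applying $\perp$ back, the solution is trivial iff the multisets $\{P_i\}$ and $\{Q_i\}$ coincide. The key structural observation is that any two distinct members of the family $\{P_i\}$ intersect trivially: adjoining $k-2$ further indices (possible since $n \geq k$) produces $k$ modules that direct-sum to $\hat W$, so a common submodule of two of them is $\{0\}$; the same holds for $\{Q_i\}$. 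Consequently the sets $P_i\setminus\{0\}$ are pairwise disjoint, likewise the $Q_j\setminus\{0\}$, and comparing the two sides of the equation shows $\bigsqcup_i (P_i\setminus\{0\}) = \bigsqcup_j (Q_j\setminus\{0\})$ as sets, with matching multiplicities on each block. If $\{P_i\}\neq\{Q_i\}$, then after cancelling common terms some $P$-module $P_1$ is not a $Q$-module, yet $P_1 = \bigcup_j (Q_j\cap P_1)$ is a disjoint cover of $P_1$ by proper submodules, and dually each of the $Q$-modules occurring here is itself disjointly covered by $P$-modules, all sharing one common multiplicity.

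The main obstacle is turning this configuration into a contradiction using the \emph{full} MDS hypothesis: the pairwise-disjointness exploited so far is already available when $k = 2$, where the statement genuinely fails (for instance whenever $\hat{A}$ is itself a disjoint union of proper submodules), so the argument must bring in the fact that $k$ of the modules span directly, together with $k \geq 3$. I would argue by a reduction to smaller $k$: pass to the quotient $\hat W/P_1 \cong \hat{A}^{k-1}$, where the images $\bar P_i$ $(i\neq 1)$ still form an MDS family of $n-1 \geq k-1$ lines, while each $Q_j$ with $Q_j\cap P_1 \neq \{0\}$ now contributes a strictly smaller module carrying the multiplicity $\card{Q_j\cap P_1}>1$; for $k-1 \geq 3$ one recurses on an equation of essentially the same shape, and the delicate base case is $k = 3$, landing in $\hat{A}^2$, which I expect to settle by a direct finite computation that uses precisely the asymmetry introduced by quotienting a $P$-module that is not a $Q$-module — this is the point where the exclusion $k\neq 2$ is unavoidable. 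Combining this with the case $k=1$ shows that \eqref{eq-main-space-equation} admits only trivial solutions for MDS codes with $k\neq 2$, whence, by \Cref{thm-isometry-criterium} and \Cref{lemma:mds-pseudoinjectivity}, $f$ extends to a monomial map.
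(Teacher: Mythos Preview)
Your setup is exactly the paper's: reduce via \Cref{thm-isometry-criterium} and \Cref{lemma:mds-pseudoinjectivity} to showing that the dual equation $\sum_i \id_{P_i} = \sum_j \id_{Q_j}$ (with $P_i=V_i^\perp$, $Q_j=U_j^\perp$, all of cardinality $\card{A}$) has only trivial solutions; note that $f(C)$ is again MDS; dispose of $k=1$; and for $k\ge 3$ arrive, after cancellation, at a module $P_1$ that is not among the $Q$'s but is covered by the proper submodules $Q_j\cap P_1$.

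The gap is in what you do next. You extract only the \emph{pairwise} condition $P_i\cap P_j=\{0\}$ (and likewise for the $Q$'s), and you yourself point out that this alone cannot suffice, since it already holds for $k=2$ where the conclusion fails. Your proposed way out --- quotient by $P_1$ and recurse --- is not a proof: you never explain how the indicator equation descends to $\hat W/P_1$ (it does not become an indicator-function equation of the images in any evident way), the phrase ``essentially the same shape'' is unjustified, and your base case $k=3$ lands precisely in the $k-1=2$ situation you have just declared hopeless, with only a hope that some ``asymmetry'' saves you. That is the missing idea.

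The paper replaces your recursion with a one-line argument that uses $k\ge 3$ directly. From $k\ge 3$ and \Cref{lemma:mds-modules} one gets not merely the pairwise condition but the \emph{three-wise} condition: for any three distinct modules in either family, say $Q_r,Q_s,Q_t$, one has $Q_t\cap(Q_r+Q_s)=\{0\}$ (adjoin $k-3$ further indices and use that the resulting $k$ modules direct-sum to $\hat W$). Now pick nonzero $a\in P_1\cap Q_1$ and nonzero $b\in P_1\cap Q_2$ for two distinct $Q$'s meeting $P_1$ (they exist since the cover is proper). Since $Q_1\cap Q_2=\{0\}$ we have $a+b\notin Q_1\cup Q_2$; but $a+b\in P_1$, so $a+b\in Q_3$ for some $Q_3$ distinct from $Q_1,Q_2$. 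Then $0\neq a+b\in Q_3\cap(Q_1+Q_2)$, contradicting the three-wise condition. This is exactly where $k\ge 3$ enters, and it finishes the proof immediately.
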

\begin{proof}
	Assume that there exists an unextendable isometry $f \in \Hom_R(C,A^n)$. From \Cref{thm-isometry-criterium} and \Cref{lemma:mds-pseudoinjectivity}, there exists a nontrivial solution of \cref{eq-main-space-equation}, or equivalently, there exists a nontrivial solution of the orthogonal equation (\ref{eq-dual}). It is clear that $f(C)$ is also an MDS code.
	
	The proof is obvious for the case $k = 1$, so let $k \geq 3$. This means, from \Cref{lemma:mds-modules}, for any different $i,j,k \in \myset{n}$, $V_i^\perp \cap (V_j^\perp + V_k^\perp) = \{0\}$.
	Without loss of generality, assume that $U_1^\perp$ is covered nontrivially by modules $V_1^\perp,\dots, V_t^\perp$, $t>1$, i.e. $U_1^\perp = \bigcup_{i=1}^t V_i^\perp$, $\{0\} \subset V_i^\perp \subset U_1^\perp$, for $i \in \myset{t}$ and no module is contained in another.

	Take a nonzero element $a \in U_1^\perp \cap V_1^\perp$ and a nonzero element $b \in U_1^\perp \cap V_2^\perp$. Obviously, since $V_1^\perp \cap V_2^\perp = \{0\}$, $a + b \not\in V_1^\perp \cup V_2^\perp$.
	But $a+b \in U_1^\perp$ and hence $t>2$. There exists an index $i$, let it be $3$, such that $a + b \in U_1^\perp \cap V_3^\perp$. Then $a+b \in (V_1^\perp + V_2^\perp) \cap V_3^\perp \neq \{0\}$, which gives a contradiction.
\end{proof}

The case of MDS codes of dimension 2 is observed in \cite{d3}, where $R$ is a finite field and the alphabet $A$ is a vector space. Note that the statement is true for all abelian groups as $\mathbb{Z}$-modules. In \cite{forney} the author proved that there exists only $(n,1)_G$ and $(n,n)_G$ MDS codes over a nonabelian group $G$. It is not difficult to show that an analogue of the extension property holds for these two families of trivial codes.

\footnotesize


\begin{thebibliography}{1}
\bibitem{dinh-lopez} 
\newblock H.~Q.~Dinh and S.~R.~L\'opez-Permouth,
\newblock {On the equivalence of codes over rings and modules},
\newblock \emph{Finite Fields and Their Applications}, \textbf{10} (2004), {615--625}.

\bibitem{dinh-lopez-1} 
\newblock H.~Q.~Dinh and S.~R.~L\'opez-Permouth,
\newblock {On the equivalence of codes over finite rings},
\newblock \emph{Appl. Algebra Eng., Commun. Comput.}, \textbf{15}, 1 (2004), {37--50}.

\bibitem{d1}
\newblock S.~Dyshko,
\newblock {On extendibility of additive code isomorphisms},
arXiv:1406.1714,  (2014).

\bibitem{d3}
\newblock S.~Dyshko,
\newblock {MacWilliams Extension Theorem for MDS additive codes},
arXiv:1504.01355, (2015).

\bibitem{forney}
\newblock G.~D.~Forney,
\newblock On the Hamming distance properties of group codes,
\newblock \emph{IEEE Transactions on Information Theory}, \textbf{38}, 6 (1992), 1797--1801.

\bibitem{greferath} 
\newblock M.~Greferath, A.~Nechaev and R.~Wisbauer, 
\newblock Finite quasi-frobenius modules and linear codes,
\newblock \emph{Journal of Algebra and Its Applications}, \textbf{3} (2004), 247--272.


\bibitem{lang}
\newblock S.~Lang,
\newblock {\emph{Algebra}}.
\newblock Addison-Wesley series in mathematics. Addison-Wesley Publishing
Company, Advanced Book Program, 1984.

\bibitem{macwilliams-phd61} 
\newblock F.J.~MacWilliams,
\newblock \emph{Combinatorial properties of elementary abelian groups},
\newblock Ph.D. thesis, Radcliffe College, Cambridge, Mass., 1962.


\bibitem{wood-foundations} 
\newblock J.~A.~Wood, 
\newblock {Foundations of linear codes defined over finite modules: the extension theorem and the {M}ac{W}illiams identities}, in
\newblock \emph{Codes over rings} (Ed. Patrick S\'ole), World Scientific Pub. Co. Inc., (2009), 124--190.

\end{thebibliography}
\end{document}